\newtheorem{definition}{Definition}
\newtheorem{lemma}{Lemma}
\newtheorem{corollary}{Corollary}
\begin{document}

\title{Quasiprobability representation of quantum coherence}

\author{J. Sperling}\email{jan.sperling@physics.ox.ac.uk}
\affiliation{Clarendon Laboratory, University of Oxford, Parks Road, Oxford OX1 3PU, United Kingdom}

\author{I. A. Walmsley}
\affiliation{Clarendon Laboratory, University of Oxford, Parks Road, Oxford OX1 3PU, United Kingdom}

\date{\today}

\begin{abstract}
	We introduce a general method for the construction of quasiprobability representations for arbitrary notions of quantum coherence.
	Our technique yields a nonnegative probability distribution for the decomposition of any classical state.
	Conversely, quantum phenomena are certified in terms of signed distributions, i.e., quasiprobabilities, and a residual component unaccessible via classical states.
	Our unifying method combines well-established concepts, such as phase-space distributions in quantum optics, with resources of quantumness relevant for quantum technologies.
	We apply our approach to analyze various forms of quantum coherence in different physical systems.
	Moreover, our framework renders it possible to uncover complex quantum correlations between systems, for example, via quasiprobability representations of multipartite entanglement.
\end{abstract}

\maketitle

\section{Introduction}\label{Sec:Introduction}

	The discovery of quantum mechanics led to paradigms describing aspects of nature beyond classical physics.
	For instance, the fact that a quantum state is represented by complex probability amplitudes provided an explanation of surprising observations, such as interference phenomena of particles.
	However, a demand to unify classical concepts with the ideas from quantum theory remained.
	For this reason, Wigner formulated a phase-space representation to provide a classical picture of the wave function \cite{W32}.
	This approach enables us to understand to what extent quantum phenomena are compatible or incompatible with their classical counterparts.
	While such a construct yields a nonnegative distribution in classical statistical mechanics, it can include negative quasiprobabilities when applied to quantum systems.

	In general, the technique of quasiprobability distributions presents a remarkably successful approach in modern quantum physics.
	For example, quasiprobabilities can be applied to confirm fundamental predictions of quantum physics, such as nonlocality \cite{LJJ09,RLHL13}, macrorealism \cite{JRL17}, and contextuality \cite{S08}.
	Moreover, negative quasiprobabilities serve as a resource for quantum information protocols \cite{VFGE12,SLR17}.

	The visualization of quantum features through quasiprobabilities is an efficient and intuitive approach to characterize quantum systems.
	Consequently, the phase-space representation became an essential tool for field theories, such as quantum optics \cite{MW95}.
	In this case, the Glauber-Sudarshan representation \cite{G63,S63} is a quasiprobability distribution which allows one to expand a state of light in terms of coherent states, which resemble the behavior of classical harmonic oscillators \cite{S26}.
	The Glauber-Sudarshan representation also inspired the formulation of a manifold of generalized quasiprobability distributions \cite{CG69,AW70}, which include the Wigner function and other well-known phase-space distributions \cite{H40,K65} as special cases.
	Beyond bosonic systems, quasiprobability distributions for fermions have been recently established as well \cite{CD06,DJB16,P16,DJB17,JRD17}.

	Connections between quasiprobability distributions and entanglement led to insights into the nature of quantum correlations \cite{DMWS06,WFPT17}, which includes the study of hybrid systems combining continuous-variable and discrete-variable degrees of freedom \cite{WMV97,ASCBZV17}.
	Also, quantum properties can be inferred from the conditional \cite{ASCBZV17,M07} and marginal \cite{KVCBAP12,PLLSZZZKN17} distributions of quasiprobabilities.
	Furthermore, even processes can be characterized using quasiprobability distributions \cite{RKVGZB13}, and multi-time phase-space distributions provide access to temporal quantum correlations \cite{V08,KVS17}.

	Quasiprobabilities are experimentally accessible \cite{SBRF93,LMKMIW96,BRWK99}.
	This includes the application to pulsed light \cite{LCGS10} and large atomic ensembles \cite{MZJHCV15}.
	The reconstruction of phase-space distributions is even possible with imperfect detection schemes \cite{HSRHMSS16,BTBSSV18} and relates to the dual operators of a positive-operator-valued measure \cite{KSVS18}.
	Quasiprobabilities also serve as the foundation to experimentally analyze other quantum features of light, such as the orbital angular momentum \cite{BQTSLSKB15}.
	In this case, however, classical angular momentum states and, similarly, classical atomic states differ from those introduced for the harmonic oscillator \cite{AD71,ACGT72}.
	Still, the formulation of quasiprobability distributions is possible using the Poincar\'e sphere as the underlying phase space \cite{A81,A93,DAS94,A99}.
	Applications can be found, for example, in the fields of quantum metrology \cite{WBIMH92} and the entanglement characterization \cite{LLA09}.
	A related question addresses nonclassical states of the polarization of light, which can be similarly represented by negative quasiprobabilities in theory and experiment \cite{KLLRS02,L06,MSPGRHKLMS12,SBKSL12,KRW14}.

	One challenge in continuous-variable systems is that quasiprobabilities can become highly singular \cite{C65,S16}.
	For this reason, regularization methods have been introduced \cite{K66,KV10} and generalized to multipartite and time-dependent systems \cite{ASV13,KVS17}.
	This allows for the direct experimental sampling of regular negative quasiprobabilities of quantum light \cite{KVHS11,ASVKMH15}.
	Notably, such irregularity problems do not occur in discrete-variable systems.

	The investigation of quasiprobabilities in finite dimensions, such as  Wigner-function-based approaches \cite{LP98,KMR06,PB11} and their generalizations (see, e.g., \cite{RMG05}), renders it possible to obtain a deeper insight into nonclassical properties specific to discrete-variable systems.
	For example, quasiprobabilities can be generalized to qudits \cite{FE09}, which are particularly relevant in the context of quantum information and communication \cite{NC00}, as well as to study the quantum property of the spin \cite{P96,TMS01,P12}.
	Furthermore, finite-dimensional quasiprobabilities relate to so-called weak measurements \cite{FLT17} and enable the efficient estimation of actual measurement-outcome probabilities \cite{PWB15}.

	For different physical systems, the notion of what properties a classical state has to have can alter \cite{P72}, leading to different types of quasiprobabilities \cite{BM98}.
	This means the analysis of quantum phenomena depends on the particular choice of classical reference states \cite{WB12}.
	Yet, some quantum correlations between subsystems, such as entanglement, are not affected by such a local basis dependence \cite{SAWV17,SLR17a}.
	To provide a unified framework for the study of different quantum phenomena, the concept of quantum coherence has been introduced (see Ref. \cite{SAP17} for an overview).
	The theory of quantum coherence yields a quantitative description of quantum effects \cite{BCP14,LM14} in which the quantum-physical superposition principle plays a pronounced role \cite{SV15}.
	Further, the operational meaning of this quantum resource for quantum protocols and processes has been investigated \cite{WY16,YMGGV16}.
	Recent applications also performed a comparative analysis of classical and nonclassical evolutions \cite{SW18} and studied the impact of the spin statistics \cite{SPBW17}.
	Moreover, if one considers bipartite product states as the classical basis, one obtains all separable states via statistical mixtures of classical states \cite{W89}.
	Consequently, quantum coherence includes entanglement, which can be also characterized in terms of negative quasiprobabilities \cite{STV98,SV09quasi}.
	Thus, the concept of quantum coherence renders it possible to study various quantum phenomena in a common framework.

	Today, the description of quantum phenomena through negative quasiprobabilities continues to have a broad impact on modern research and is the basis for a profound understanding of quantum physics.
	While remarkable progress has been made recently in the quasiprobability-based description of quantum coherence (see, e.g., \cite{PJSZ15,Z16,TESMN16,DF17}), the actual construction of an optimal decomposition in terms of quasiprobability distributions over classical states remained an open problem.

	In this paper we derive a method for the construction of quasiprobability distributions for general forms of quantum coherence.
	Since such a decomposition is, in general, not unique, we derive a method which is optimal in the sense that it necessarily yields a nonnegative decomposition for all classical states.
	By contrast, quantum coherence is determined via negativities within the quasiprobability distribution over the set of classical states or the inability to expand the state under study in terms of any linear combination of classical states, leading to an orthogonal residual component.
	To apply our technique, we analytically study quantum coherences for different scenarios relevant in quantum information and quantum physics.
	Furthermore, we uncover quantum correlations between multiple degrees of freedom, including multipartite entanglement.
	Thus, we provide a construction method for obtaining optimal quasiprobabilities to investigate phenomena of quantum coherence in a unified manner.

	The paper is organized as follows.
	We explain the need for optimal decompositions and define essential concepts in Sec. \ref{Sec:Prelim}.
	Section \ref{Sec:Decomp} includes the mathematical framework which results in the systematic formulation of our technique.
	The method is then applied to different notions of quantum coherence in Sec. \ref{Sec:Apps}, followed by a study of diverse notions of quantum correlations in Sec. \ref{Sec:Corr}.
	Our work is summarized in Sec. \ref{Sec:Conclusion}.

\section{Preliminaries}\label{Sec:Prelim}

\subsection{Density operator decomposition}\label{Subsec:Density}

	A fundamental paradigm of quantum physics is that the state of a system can be described in terms of a density operator.
	More specifically, any quantum state can be written as statistical mixture of pure ones,
	\begin{align}
		\label{eq:GeneralDecomposition}
		\hat\rho=\int dP_\mathrm{cl.}(\psi)|\psi\rangle\langle\psi|.
	\end{align}
	Here $P_\mathrm{cl.}$ is a classical, i.e., normalized and nonnegative, probability distribution over the set of pure states.
	The density operator is a nonnegative and Hermitian operator with a unit trace norm, $\hat\rho=\hat\rho^\dag\geq0$ and $\mathrm{tr}(\hat\rho)=1$.

	There are two problems with the decomposition of mixed states in terms of pure ones.
	First, the decomposition \eqref{eq:GeneralDecomposition} is ambiguous.
	For example, the density operator $\hat\rho'=(|0\rangle\langle 0|+2|1\rangle\langle 1|)/3,$ in which $\{|0\rangle,|1\rangle\}$ denotes an orthonormal basis, can be also expanded as $\hat\rho'=(4|\phi\rangle\langle\phi|+5|\chi_{-}\rangle\langle\chi_{-}|)/9$, where $|\phi\rangle=(|0\rangle+|1\rangle)/\sqrt{2}$ and $|\chi_{\pm}\rangle=(|0\rangle\pm2|1\rangle)/\sqrt{5}$.
	Second, a quasiprobability, i.e., $P\neq P_\mathrm{cl.}$, also enables a decomposition of the state.
	For our example, we can find the quasi-mixture representation $\hat\rho'=(4|1\rangle\langle 1|+4|\phi\rangle\langle\phi|-5|\chi_{+}\rangle\langle \chi_{+}|)/3$ with a negative ``probability'' of the contribution $\chi_{+}$.

	To overcome these deficiencies, one can use the spectral decomposition for an optimal state expansion,
	\begin{align}
		\label{eq:SpectralDecomposition}
		\hat\rho=\sum_n p_n|\psi_n\rangle\langle\psi_n|,
	\end{align}
	which is based on the eigenvalue problem of the density operator $\hat\rho|\psi_n\rangle=p_n|\psi_n\rangle$.
	The eigenvalues in this decomposition are nonnegative and normalized, $p_n\geq 0$ for all $n$ and $\sum_n p_n=1$.
	Using the spectral decomposition, we get a valid classical and unique (up to degenerate eigenvalues) probability distribution for Eq. \eqref{eq:GeneralDecomposition}.
	Specifically, the probability density is given by $dP_\mathrm{cl.}(\psi)/d\psi=\sum_{n} p_n \delta(\psi-\psi_n)$, where $\delta$ denotes the Dirac delta distribution.
	For our considerations, it is also worth mentioning that an unphysical density operator, i.e., $\hat\rho''\ngeq 0$, necessarily includes at least one negative eigenvalue.
	In fact, the spectral decomposition \eqref{eq:SpectralDecomposition} proves that this operator $\hat\rho''$ cannot be a convex combination \eqref{eq:GeneralDecomposition} of pure physical states.

	Thus, the spectral decomposition yields an optimal decomposition in terms of pure states and enables us distinguish valid density operators from operators outside the convex set of physical states.
	We are going to generalize this concept for identifying classical and nonclassical states.

\subsection{Quasiprobabilities for harmonic oscillators}\label{Subsec:GSDecomposition}

	Sometimes a restriction to specific families of pure states for the decomposition \eqref{eq:GeneralDecomposition} is useful for the aim of studying specific quantum phenomena.
	This implies that the spectral decomposition \eqref{eq:SpectralDecomposition} cannot be applied under this restriction, because the eigenvectors are not necessarily in the family of states under study.

	A prominent example for such a scenario is the Glauber-Sudarshan representation of quantum states of light \cite{G63,S63},
	\begin{align}
		\label{eq:GSDecomposition}
		\hat\rho=\int dP(\alpha)|\alpha\rangle\langle\alpha|,
	\end{align}
	which exclusively employs coherent states $|\alpha\rangle$ since they resemble most closely the behavior of a classical harmonic oscillator \cite{S26}.
	Note that in the context of modern resource theories, the commonly used name ``coherent state'' for a classical optical state might be confusing, but it represents the historic approach to define this state as a superposition of photon-number eigenstates.
	In order to expand any quantum state in the form of Eq. \eqref{eq:GSDecomposition}, it is unavoidable to allow for quasiprobabilities, $P\neq P_\mathrm{cl.}$.
	In fact, this turns out to be beneficial as the impossibility to find a classical Glauber-Sudarshan distribution defines nonclassical light and separates quantum-optical phenomena from classical statistical optics \cite{TG65,M86}.
	An example of an experimentally reconstructed, negative Glauber-Sudarshan quasiprobability density can be found in Ref. \cite{KVPZB08} and is similar to the one presented in Fig. \ref{fig:SPATS} for a state studied in Ref. \cite{AT92}.
	It is also worth mentioning that the class of Glauber-Sudarshan distributions can be ambiguous too \cite{S16}.

\begin{figure}
	\includegraphics[width=\columnwidth]{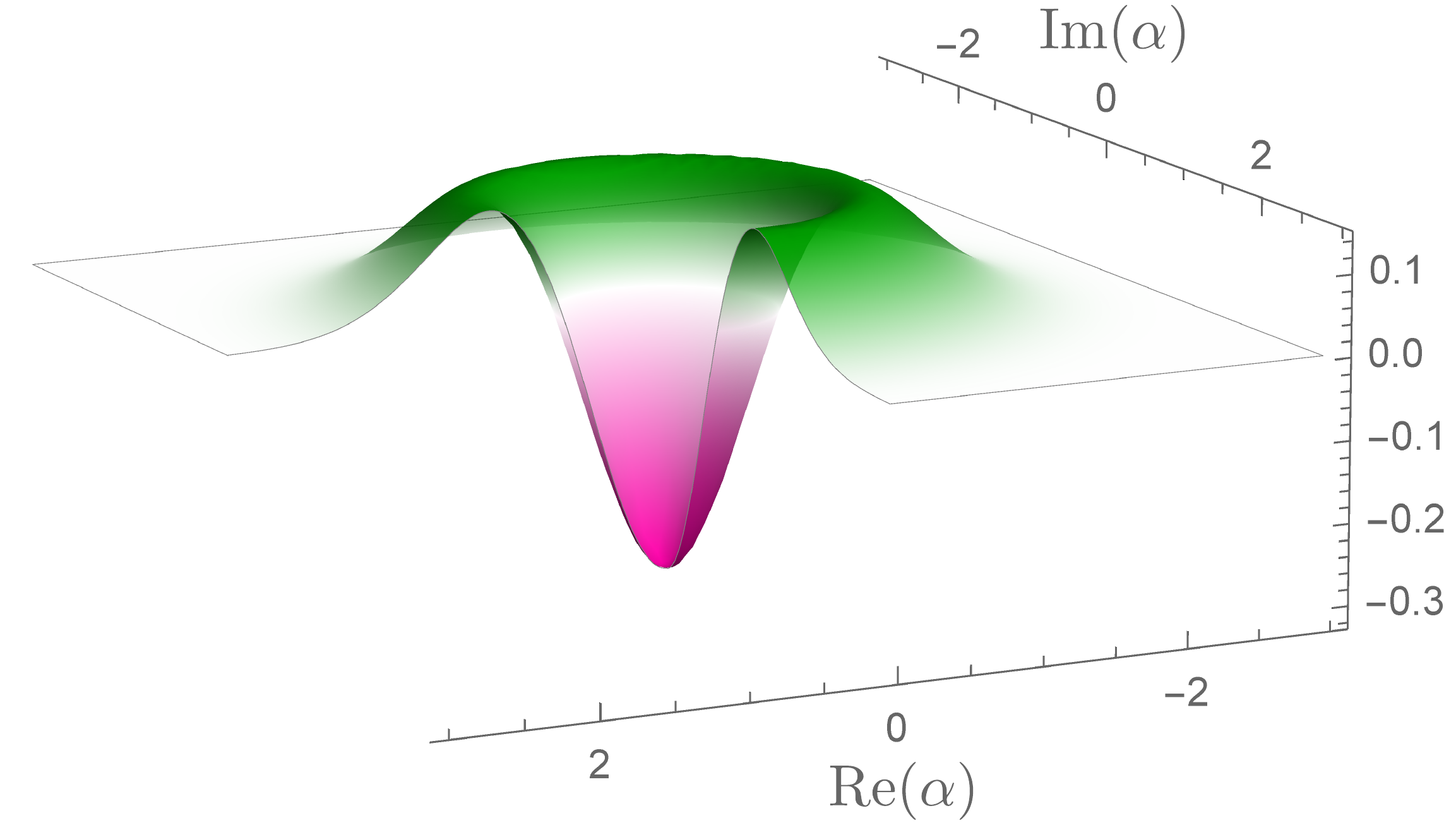}
	\caption{
		Quasiprobability density of a single-photon-added thermal state, $dP(\alpha)/d\alpha=([1+\bar n]|\alpha|^2-\bar n])e^{-|\alpha|^2/\bar n}/(\pi\bar n^3)$ for $\bar n=1$.
		The negative contributions (magenta) certify that this state cannot be written as a convex mixture of coherent states.
	}\label{fig:SPATS}
\end{figure}

	The Glauber-Sudarshan decomposition \eqref{eq:GSDecomposition} is an intuitive approach to visualize nonclassical quantum states of light (see Fig. \ref{fig:SPATS}).
	Therefore, we are going to generalize such a method to other forms of quantum phenomena beyond quantized radiation fields described through harmonic oscillators.

\subsection{Quantum coherence}

	To address general quantum phenomena, the notion of quantum coherence has been introduced, which combines various concepts of nonclassicality.
	A recent comprehensive review can be found in Ref. \cite{SAP17}.

	For defining a general classical state, one supposes that $\mathcal C$ is a closed subset of pure states, $|c\rangle\langle c|$ with $\langle c|c\rangle=1$, which are consistent with a classical model to be specified.
	In the most general case, these states neither are orthogonal nor form a basis.
	A mixed classical state can written as the convex mixture of pure ones,
	\begin{align}
		\label{eq:IncoherentDecomposition}
		\hat\rho=\int dP_\mathrm{cl.}(c) |c\rangle\langle c|,
	\end{align}
	where $P_\mathrm{cl.}$ is a probability distribution over $\mathcal C$.
	Thus, any classical state is in the closure, with respect to the trace norm, of the convex span of pure classical states $\mathrm{conv}(\mathcal C)$.
	Then $\hat\rho$ is a so-called incoherent state because it can be described solely in terms of statistical mixtures of pure classical states and it does not require coherent quantum superpositions.
	Conversely, a state which is not classical exhibits quantum coherences and is consequently referred to as a nonclassical state.

	In this work we restrict ourselves to finite-dimensional systems.
	This prevents us from dealing with the mathematical peculiarities of highly singular distributions, as they are typical for continuous-variable systems \cite{S16}.
	Moreover, this is specifically useful for applications in quantum information technology employing discrete variables.
	Further, this enables us to write the continuous representation in Eq. \eqref{eq:IncoherentDecomposition} as a discrete sum $\hat\rho=\sum_{n} p_n |c_n\rangle\langle c_n|$
	(see Ref. \cite{G07} for an introduction to convex geometry in finite spaces).

	Motivated by the intuitive quasiprobability decomposition for the harmonic oscillator [Eq. \eqref{eq:GSDecomposition} and Fig. \ref{fig:SPATS}] and the uniqueness of the spectral decomposition [Eq. \eqref{eq:SpectralDecomposition}], we are going to devise a technique for the decomposition of quantum states in terms of classical ones.
	There are two important aspects which are satisfied by our approach.
	First, we develop a constructive approach beyond bare existence statements, which are typically addressed.
	Second, our method ensures that any classical (i.e., incoherent) state is optimally described in terms of a classical distribution.
	Thus, if our reconstructed probabilities fails to be a nonnegative representation of the state under study, this state is uniquely certified to be a nonclassical one.

\section{Quantum state decomposition}\label{Sec:Decomp}

\subsection{Initial remarks}

	A trivial observation to be made is that any quantum state $\hat\rho$ can be decomposed as
	\begin{align}
		\label{eq:QuantumDecomposition}
		\hat\rho=\sum_n p_n|c_n\rangle\langle c_n|+\hat\rho_\mathrm{res.},
	\end{align}
	where $\vec p=(p_n)_n$ is an arbitrary (typically, signed) distribution, $|c_n\rangle\langle c_n|\in\mathcal C$, and $\hat\rho_\mathrm{res.}$ is a residual part of the state.
	The residual component is not accessible when restricting to the states in $\mathcal C$ (cf. Fig. \ref{fig:SchemeDecomposition}).
	A state is classical if a representation with $\vec p\geq0$ (i.e., $p_n\geq0$ for all $n$) and $\hat\rho_\mathrm{res.}=0$ exists [cf. Eq. \eqref{eq:IncoherentDecomposition}].
	Conversely, as depicted in Fig. \ref{fig:SchemeDecomposition}, nonclassicality results from the impossibility of a classical representation.
	Thus, nonclassical states require a quasiprobability representation, $\vec p\ngeq0$, or a non-vanishing component $\hat\rho_\mathrm{res.}$.
	
\begin{figure}
	\includegraphics[width=\columnwidth]{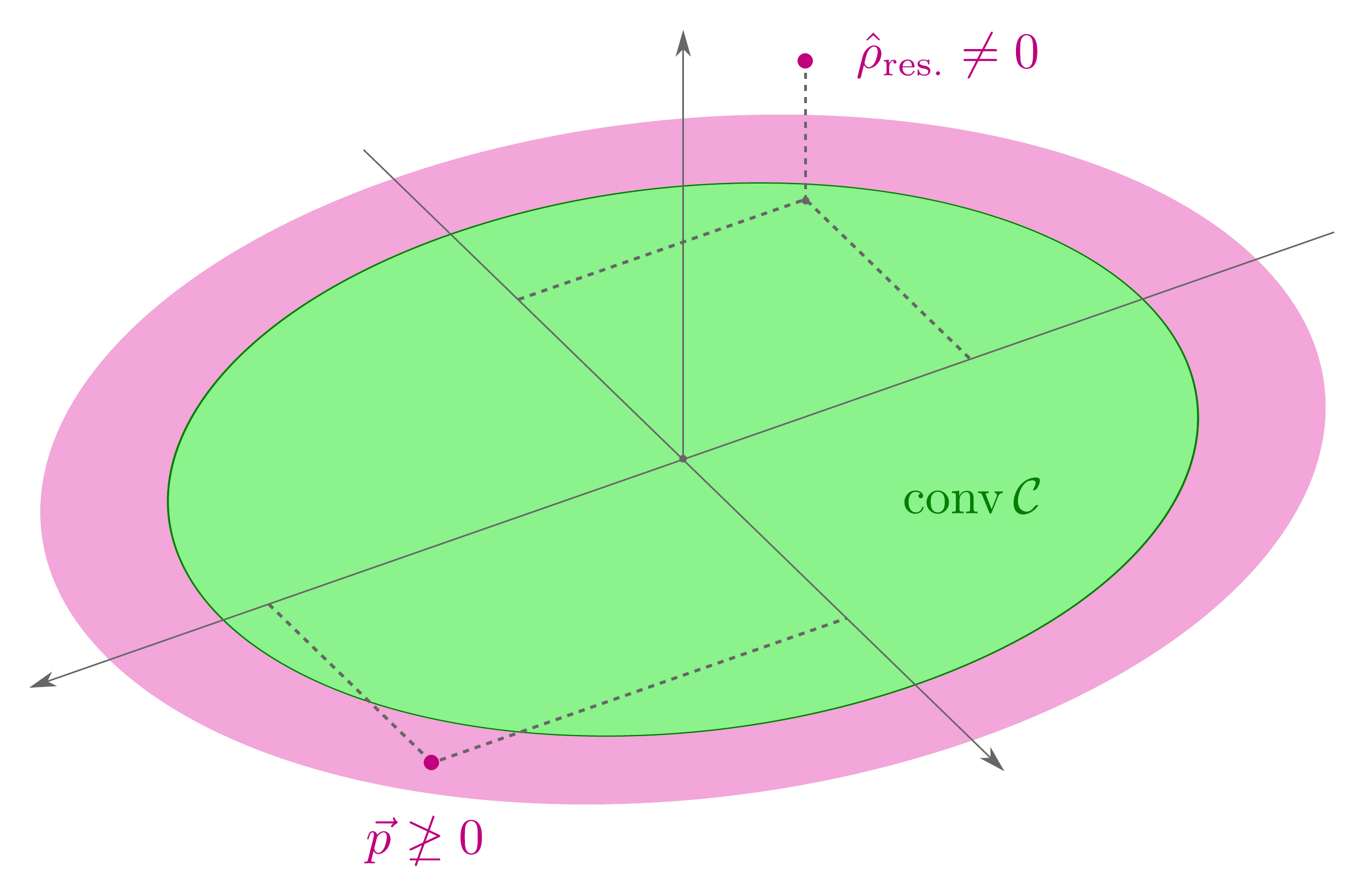}
	\caption{
		Schematic decomposition.
		For classical states $\hat\rho$, i.e., elements of the convex span of $\mathcal C$ (green area), the decomposition \eqref{eq:IncoherentDecomposition} applies.
		Nonclassical states can be further separated into two classes [Eq. \eqref{eq:QuantumDecomposition}]: elements in the space spanned by $\mathcal C$ (magenta area with $\vec p\ngeq0$) and elements out of this plane ($\hat\rho_\mathrm{res.}\neq0$).
		Note that, in general, these classes are not disjoint because we can simultaneously have $\hat\rho\neq 0$ and $\vec p\ngeq 0$.
	}\label{fig:SchemeDecomposition}
\end{figure}

	We emphasize that the notions of classicality studied here are based on the existence of a convex decomposition of a state.
	It does not mean that any decomposition has to be nonnegative, as we pointed out for the decomposition of general density operators (Sec. \ref{Subsec:Density}).
	This means even if, for example, $\vec p\ngeq0$ is true for one decomposition, there might exist another one with ${\vec p}'\geq 0$.
	Characterizing all possible decompositions becomes specifically cumbersome when $\mathcal C$ includes an infinite number of elements.

	Therefore, we aim to construct an optimal convex decomposition of any element $\hat \rho\in\mathrm{conv}(\mathcal C)$.
	Our approach is based on the premature observation that pure states $\mathcal D\subset\mathcal C$ defined by an optimal distance to $\hat\rho$ enable the desired decomposition (see Fig. \ref{fig:SchemeComposition}).
	In this section, we develop the rigorous mathematical framework to prove that this intuitive picture holds true and it additionally yields a constructive approach to obtain a decomposition with $\vec p\geq0$ for any $\hat\rho\in\mathrm{conv}\,\mathcal C$.

\begin{figure}
	\includegraphics[width=\columnwidth]{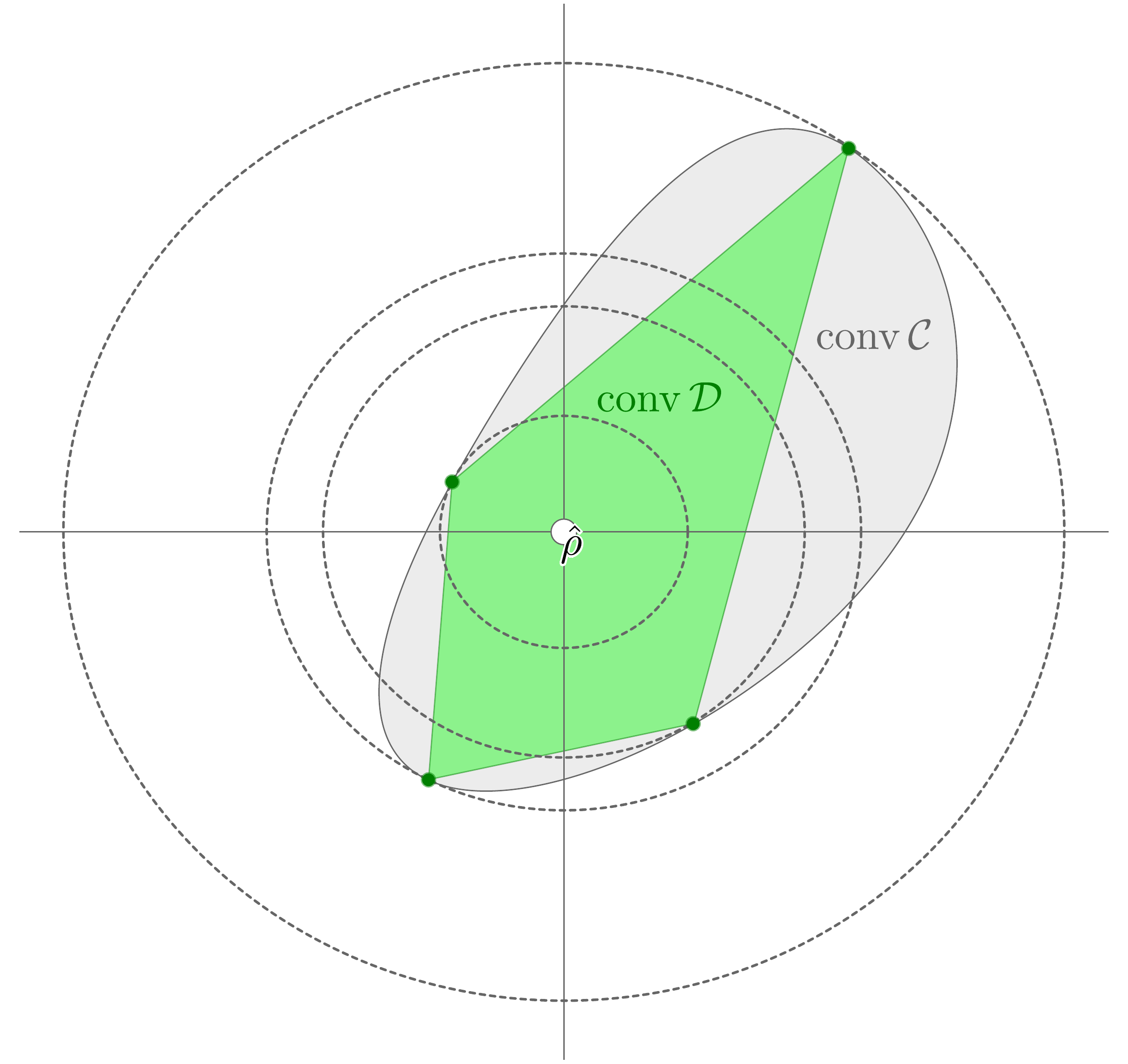}
	\caption{
		Decomposition of $\hat\rho$, which is an element of $\mathrm{conv}\,\mathcal C$ (light gray area).
		The optimal distances (dashed circles) of this state to $\mathcal C$ (here the boundary of gray area) results in a subset of points $\mathcal D\subset\mathcal C$ (green bullets).
		For an optimal decomposition, we are going to show that $\hat\rho\in\mathrm{conv}\,\mathcal D$ (light green polytope) holds true.
	}\label{fig:SchemeComposition}
\end{figure}
	
	Let us briefly establish some notations and methods common in convex geometry \cite{G07} to perform this task.
	Since the case $\hat\rho\in\mathcal C$ is trivial, let us further assume that $\hat\rho$ is not a pure state.
	Furthermore, it is convenient to shift the problem so that the element under study is in the origin.
	This means we translate the space such that the element of interest is $\hat\rho_0=0$.
	This is indicated by the index ``$0$,'' which defines the operation
	\begin{align}
		\hat x\mapsto \hat x_0=\hat x-\hat\rho.
	\end{align}
	This map yields the translated set $\mathcal C_0=\{|c\rangle\langle c|-\hat\rho:|c\rangle\langle c|\in\mathcal C\}$ and also implies $\hat\rho_0=\hat\rho-\hat\rho=0\in\mathrm{conv}(\mathcal C_0)$.

	In addition, we can define a subspace of Hermitian operators, $\mathcal X=\mathrm{lin}(\mathcal C)$.
	Note that $\mathcal X$ is defined only over real numbers $\mathbb R$ as the operator $i\hat x$ is not Hermitian for $\hat x=\hat x^\dag$ ($\hat x\neq0$).
	This space can be equipped with the Hilbert-Schmidt scalar product
	\begin{align}
		\label{eq:HSproduct}
		(\hat x|\hat y)=\mathrm{tr}(\hat x\hat y)
	\end{align}
	for $\hat x,\hat y\in\mathcal X$ which yields the norm $\|\hat x\|=(\hat x|\hat x)^{1/2}$.
	For example, $\|\hat\rho\|^2$ describes the purity of the state.
	The dual space $\mathcal X'$ is the set of all linear maps $f:\mathcal X\to\mathbb R$ and can be represented by
	\begin{align}
		f(\hat x)=(\hat f|\hat x)
	\end{align}
	for an operator $\hat f\in\mathcal X$.
	It is worth mentioning that the dual space plays a fundamental role for the detection of nonclassicality, e.g., in terms of entanglement witnesses \cite{HHH96}.

\subsection{Convex decomposition}

	We need two definitions to formalize our treatment.

	\begin{definition}\label{Def:StatPoint}
		Let $\mathcal B_\varepsilon(\hat x)=\{\hat z\in\mathcal X: \|\hat z-\hat x\|\leq \varepsilon\}$ be the ball of radius $\varepsilon$ centered at $\hat x$.
		An element $\hat y$ of a closed and bounded set $\mathcal M\neq\emptyset$ is a {\em stationary point} to $\hat x$ if
		\begin{align*}
			&\exists \varepsilon>0 \ \forall \hat z\in\mathcal B_\varepsilon(\hat x)\cap\mathcal M: \|\hat y-\hat x\|\geq \|\hat z-\hat x\|\\
		\text{or}\quad
			&\exists \varepsilon>0 \ \forall \hat z\in\mathcal B_\varepsilon(\hat x)\cap\mathcal M: \|\hat y-\hat x\|\leq \|\hat z-\hat x\|.
		\end{align*}
	\end{definition}

	This definition means that stationary points have an optimal (minimal or maximal) distance to $\hat y$ in an $\varepsilon$ neighborhood.
	Because $\mathcal M$ is a closed and bounded subset of a finite-dimensional space (thus, compact), there are least stationary maximum and minimum points, namely, the global maximum and minimum.
	Also note that Definition \ref{Def:StatPoint} implies that an isolated point, an $\hat x\in\mathcal M$ which is the only point in an $\varepsilon$ neighborhood, i.e., $\mathcal B_\varepsilon(\hat x)\cap\mathcal M=\{\hat x\}$, is a stationary point.

	\begin{definition}\label{Def:ExhaustiveSet}
		A closed set $\mathcal N$ is {\em exhaustive} if for all $f\in\mathcal X'$ there exist $\hat x_+,\hat x_-\in\mathcal N$ such that
		\begin{align*}
			f(\hat x_+)\geq 0
			\quad\text{and}\quad
			f(\hat x_-)\leq 0.
		\end{align*}
	\end{definition}

	An exhaustive set ensures that for any hyperplane $\{\hat x\in\mathcal X:f(\hat x)=0\}$ there exist stationary points above (or on) and under (or on) this hyperplane.
	Note that $\hat x_+=\hat x_-$ is possible when $f(\hat x_\pm)=0$.
	Exhaustive sets are important as they ensure the convex decomposition of the zero element.

	\begin{lemma}\label{Lem:ZeroGen}
		$0\in\mathrm{conv}(\mathcal N)$ holds true for an exhaustive set $\mathcal N$.
	\end{lemma}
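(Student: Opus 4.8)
The plan is to prove the contrapositive via a separating-hyperplane argument. Suppose $0\notin\mathrm{conv}(\mathcal N)$. Since $\mathcal N$ is closed and bounded in a finite-dimensional space, $\mathrm{conv}(\mathcal N)$ is a compact convex set, and by hypothesis it does not contain the origin. The separating hyperplane theorem for a point and a disjoint compact convex set then yields a nonzero linear functional strictly separating them: there exists $f\in\mathcal X'$ and a constant $\delta>0$ with $f(\hat x)\geq\delta>0=f(0)$ for all $\hat x\in\mathrm{conv}(\mathcal N)$, hence in particular for all $\hat x\in\mathcal N$. This functional violates the exhaustiveness condition, because it admits no point $\hat x_-\in\mathcal N$ with $f(\hat x_-)\leq0$.

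Concretely, I would realize the separating functional geometrically. Let $\hat g$ be the unique point of $\mathrm{conv}(\mathcal N)$ nearest the origin in the Hilbert-Schmidt norm \eqref{eq:HSproduct}; this exists and is unique by compactness and strict convexity of the norm, and $\hat g\neq0$ since $0\notin\mathrm{conv}(\mathcal N)$. The standard variational argument for nearest points shows that $(\hat g|\hat x-\hat g)\geq0$ for every $\hat x\in\mathrm{conv}(\mathcal N)$, so defining $f(\hat x)=(\hat g|\hat x)$ gives $f(\hat x)\geq(\hat g|\hat g)=\|\hat g\|^2>0$ for all such $\hat x$. Thus $f(\hat x)>0$ on all of $\mathcal N$, contradicting the existence of the required $\hat x_-$ with $f(\hat x_-)\leq0$ in Definition \ref{Def:ExhaustiveSet}. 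Therefore the assumption $0\notin\mathrm{conv}(\mathcal N)$ is untenable, and $0\in\mathrm{conv}(\mathcal N)$ follows.

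The main subtlety to handle carefully is the identification of the dual space $\mathcal X'$ with operators via the scalar product. The excerpt has already established that every $f\in\mathcal X'$ has the representation $f(\hat x)=(\hat f|\hat x)$ for some $\hat f\in\mathcal X$, so I may take the separating functional to live in $\mathcal X'$ as required by Definition \ref{Def:ExhaustiveSet}, rather than in some larger dual. One must verify that the nearest point $\hat g$, and hence $f$, genuinely lies in $\mathcal X=\mathrm{lin}(\mathcal C)$ and not merely in the ambient operator space; this holds because $\mathrm{conv}(\mathcal N)\subseteq\mathcal X$ (as $\mathcal N\subseteq\mathcal X$, the setting in which exhaustiveness is defined) and the nearest-point projection of the origin onto a convex subset of the subspace $\mathcal X$ remains in $\mathcal X$.

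I expect the genuine obstacle to be small but worth stating precisely: confirming that the nonstrict inequalities in Definition \ref{Def:ExhaustiveSet} are exactly what the strict separation refutes. Exhaustiveness requires, for the separating $f$, a point $\hat x_-$ with $f(\hat x_-)\leq0$; strict separation gives $f(\hat x)\geq\|\hat g\|^2>0$ for all $\hat x\in\mathcal N$, which rules this out and completes the contradiction. No compactness-of-the-unit-sphere or Hahn-Banach machinery beyond the finite-dimensional nearest-point construction is needed, so the argument stays elementary and self-contained.
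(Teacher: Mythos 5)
Your proof is correct and follows essentially the same route as the paper: assume $0\notin\mathrm{conv}(\mathcal N)$, produce a linear functional separating the origin from $\mathrm{conv}(\mathcal N)$, and contradict the exhaustiveness condition of Definition \ref{Def:ExhaustiveSet}. The only difference is that the paper simply cites the Hahn--Banach separation theorem, whereas you construct the separating functional explicitly via the nearest-point projection; this is a standard proof of the finite-dimensional separation theorem, so it makes the argument more self-contained without changing its substance.
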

	\begin{proof}
		Assume $0\notin\mathrm{conv}(\mathcal N)$.
		From the Hahn-Banach separation theorem it follows that there exists $f\in\mathcal X'$ such that for all $\hat y\in\mathrm{conv}(\mathcal N)$, $f(\hat y)<0=f(0)$ holds true.
		This contradicts the property of the exhaustive set $\mathcal N$.
		Thus, the assumption $0\notin\mathrm{conv}(\mathcal N)$ is false.
	\end{proof}

	The combination of both definitions proves that the stationary points form an exhaustive set.

	\begin{lemma}\label{Lem:PMPoints}
		The set of stationary points $\mathcal D_0\subset\mathcal C_0$ to $\hat\rho_0=0$ is exhaustive.
	\end{lemma}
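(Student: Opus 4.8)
The plan is to verify Definition \ref{Def:ExhaustiveSet} for $\mathcal D_0$ directly: given an arbitrary $f\in\mathcal X'$, I must exhibit stationary points $\hat x_+,\hat x_-$ with $f(\hat x_+)\ge 0$ and $f(\hat x_-)\le 0$. Replacing $f$ by $-f$ swaps these two demands and leaves the set of stationary points unchanged, so it is enough to prove the one-sided statement: for every $f$ there is a stationary point $\hat y$ with $f(\hat y)\ge 0$. Applying it to $f$ and to $-f$ then delivers both $\hat x_\pm$.

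Two observations prepare the ground. First, from $0=\hat\rho_0\in\mathrm{conv}(\mathcal C_0)$ and $0=f(0)=\sum_n p_n f(\hat c_n)$ one sees that $f$ cannot be strictly negative on all of $\mathcal C_0$, so the compact set $\mathcal C_0\cap\{f\ge 0\}$ is nonempty. Second, every $\hat c\in\mathcal C$ is pure, hence $\|\hat c\|^2=\mathrm{tr}(\hat c)=1$, so $\mathcal C_0$ lies on the sphere $\{\hat x:\|\hat x+\hat\rho\|=1\}$; there the squared distance to the origin is affine, $\|\hat x\|^2=1-\|\hat\rho\|^2-2(\hat\rho|\hat x)$. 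Consequently a point of $\mathcal C_0$ is a local distance extremum precisely when it locally extremizes the linear functional $(\hat\rho|\,\cdot\,)$, and by Lemma \ref{Lem:ZeroGen} the exhaustiveness I want is equivalent to $0\in\mathrm{conv}(\mathcal D_0)$.

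The constructive step is to maximize $\|\cdot\|$ over $\mathcal C_0\cap\{f\ge 0\}$, obtaining a maximizer $\hat x_+$. If $f(\hat x_+)>0$, a full neighborhood of $\hat x_+$ in $\mathcal C_0$ still obeys $f\ge 0$, so $\hat x_+$ is a local distance maximum over all of $\mathcal C_0$—that is, a stationary point in the sense of Definition \ref{Def:StatPoint}, with $f(\hat x_+)\ge 0$ as required. (Minimizing $\|\cdot\|$ over the same set plays the analogous role when the minimizer lands in the open half-space.) In the generic situation this already produces the required stationary point, and carrying the construction out for $f$ and for $-f$ then yields the pair $\hat x_\pm$.

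The step I expect to be the main obstacle is the boundary case $f(\hat x_+)=0$, where the constrained maximizer sits exactly on the hyperplane $\{f=0\}$ and need not be extremal for the \emph{unconstrained} set: nearby points of $\mathcal C_0$ with $f<0$ could have strictly larger distance, so that $\hat x_+$ is neither a local maximum nor a local minimum over $\mathcal C_0$. This is precisely where the spherical (affine-distance) structure of $\mathcal C_0$ must be exploited, rather than treating $\mathcal C_0$ as an arbitrary compact set. The plan to close it is to combine the global distance extrema—which are always stationary—with the face of $\mathcal C_0$ cut out by the supporting hyperplane in the $f$ direction, and to use the strict convexity of the sphere together with the compactness of $\mathcal C_0$ to locate a genuine stationary point still satisfying $f\ge 0$.
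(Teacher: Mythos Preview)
Your argument follows the same line as the paper's: establish that $\mathcal M_f=\mathcal C_0\cap\{f\ge0\}$ is nonempty (from $0\in\mathrm{conv}\,\mathcal C_0$ and linearity of $f$), take a distance extremum over $\mathcal M_f$, and declare it a stationary point of $\mathcal C_0$. The paper packages this as a proof by contradiction, but the mechanism is identical, and your reduction to the one-sided statement via $f\mapsto -f$ is exactly the ``analogously'' at the end of the paper's argument.

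Where the proposal is incomplete is the boundary case $f(\hat x_+)=0$ that you yourself flag. You correctly observe that the constrained maximizer need not be a local extremum of the \emph{unconstrained} set $\mathcal C_0$: nearby points with $f<0$ may have strictly larger distance. Your sketch of a fix, however, does not go through as written. Strict convexity of the ambient sphere constrains the sphere, not the arbitrary closed subset $\mathcal C_0$ sitting on it; and ``the face of $\mathcal C_0$ cut out by the supporting hyperplane in the $f$ direction'' is not tied to the distance functional $(\hat\rho|\,\cdot\,)$ in any way that would force a stationary point to land in $\{f\ge0\}$. The affine identity $\|\hat x\|^2=1-\|\hat\rho\|^2-2(\hat\rho|\hat x)$ you derive is a useful reformulation, but it only says that stationary points of distance coincide with local extrema of the \emph{fixed} linear functional $(\hat\rho|\,\cdot\,)$ on $\mathcal C_0$; it gives no leverage on the \emph{arbitrary} functional $f$. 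So the proposal stops short exactly at the step it identifies as delicate. It is worth noting that the paper's own proof does not address this boundary case either: it simply asserts that ``$\mathcal M_f$ has to include at least one stationary point $\hat x_+$, e.g., a global maximum'' without explaining why the global maximum over $\mathcal M_f$ remains stationary for $\mathcal C_0$ when it lies on $\{f=0\}$. Your write-up is therefore at the same level of rigor as the paper's at this point, and more transparent about where the difficulty sits---but neither constitutes a complete argument.
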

	\begin{proof}
		To construct a contradiction, let us assume that the negation is true, i.e., $\exists f\in\mathcal X'$ such that for all $\hat x$ stationary points of $\mathcal C_0$ to $0$, we have $f(\hat x)<0$.
		Since $0\in\mathrm{conv}(\mathcal C_0)\setminus\mathcal C_0$, there exist $\hat c_j\in\mathcal C_0$ and $p_j>0$ such that $0=\sum_j p_j\hat c_j$.
		Because $f$ is linear and $f(0)=0$, we have $f(\hat y)\geq 0$ for at least one $\hat y=\hat c_{j'}$.
		Because of our assumption, $\hat y$ is not stationary.
		Thus, we define the nonempty closed set $\mathcal M_f=\{\hat y\in\mathcal C_0: f(\hat y)\geq0\}$ which is separated by $f$ from all stationary points.
		However, $\mathcal M_p$ has to include at least one stationary point $\hat x_{+}$, e.g., a global maximum.
		This yields the contradiction; our assumption is not true.
		Analogously, we can show the existence of a stationary point $\hat x_{-}\in\mathcal C_0$ to $0$ for which $f(\hat x_{-})\leq 0$.
	\end{proof}

	With this lemma, we have identified the elements of $\mathcal C_0$ which allow for the desired convex decomposition.
	Namely, $\hat\rho_0\in\mathrm{conv}\,\mathcal C_0$ can be written as a convex combination of its stationary points in $\mathcal C_0$.
	It is also worth emphasizing that a subset of stationary points can be sufficient as long as it remains to be exhaustive.

	Undoing the initial shift $\hat\rho_0\mapsto\hat\rho$, we can directly conclude the following corollary, which also confirms our intuition presented in Fig. \ref{fig:SchemeComposition}.

	\begin{corollary}\label{Theo:ConvDec}
		Any $\hat \rho\in\mathrm{conv}(\mathcal C)$ can be written as a convex combination of stationary points in $\mathcal C$ and likewise any subset $\mathcal D$ thereof for which $\mathcal D_0$ is exhaustive.
	\end{corollary}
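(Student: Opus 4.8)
The plan is to chain the two preceding lemmas and then reverse the affine translation $\hat x \mapsto \hat x_0 = \hat x - \hat\rho$. First I would invoke Lemma \ref{Lem:PMPoints}, which tells us that the stationary points $\mathcal D_0 \subset \mathcal C_0$ to $\hat\rho_0 = 0$ form an exhaustive set. Feeding this into Lemma \ref{Lem:ZeroGen} immediately gives $0 \in \mathrm{conv}(\mathcal D_0)$. In the finite-dimensional space $\mathcal X$, membership in $\mathrm{conv}(\mathcal D_0)$ amounts to a \emph{finite} convex representation $0 = \sum_j p_j \hat d_j$ with $\hat d_j \in \mathcal D_0$, $p_j \geq 0$, and $\sum_j p_j = 1$, exactly as the finite sums already used in the proof of Lemma \ref{Lem:PMPoints}.

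Second, I would undo the shift. Writing each stationary point as $\hat d_j = |c_j\rangle\langle c_j| - \hat\rho$ with $|c_j\rangle\langle c_j| \in \mathcal D \subset \mathcal C$, and using the normalization $\sum_j p_j = 1$, I would obtain
\begin{align*}
	0 = \sum_j p_j \hat d_j = \sum_j p_j\left(|c_j\rangle\langle c_j| - \hat\rho\right) = \left(\sum_j p_j |c_j\rangle\langle c_j|\right) - \hat\rho.
\end{align*}
Rearranging yields $\hat\rho = \sum_j p_j |c_j\rangle\langle c_j|$, the desired convex combination of stationary points lying in $\mathcal C$. This establishes the first assertion.

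For the refinement to an arbitrary subset $\mathcal D$ of stationary points whose translate $\mathcal D_0$ remains exhaustive, I would apply Lemma \ref{Lem:ZeroGen} directly to $\mathcal D_0$ in place of the full stationary set; the identical back-translation then reproduces $\hat\rho \in \mathrm{conv}(\mathcal D)$. The one step I would verify with care---and the only place where anything nontrivial happens---is that the affine shift maps convex combinations to convex combinations: this holds precisely because the weights satisfy $\sum_j p_j = 1$, which is simultaneously what guarantees that the recovered coefficients constitute a genuine probability distribution rather than a merely signed one. Everything else is a direct assembly of the results already in hand.
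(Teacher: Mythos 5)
Your proposal is correct and follows exactly the paper's route: Lemma \ref{Lem:PMPoints} gives that $\mathcal D_0$ is exhaustive, Lemma \ref{Lem:ZeroGen} gives $0\in\mathrm{conv}(\mathcal D_0)$, and undoing the translation (which respects convex combinations precisely because $\sum_j p_j=1$) yields $\hat\rho\in\mathrm{conv}(\mathcal D)$. The paper states this back-translation only implicitly (``undoing the initial shift''), so your explicit verification of the normalization step is a faithful, slightly more detailed rendering of the same argument.
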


	As a consequence of this corollary, the element $\hat\rho$ can be written as
	\begin{align}
		\hat \rho=\sum_{\hat x\in\mathcal D} p_x \hat x.
	\end{align}
	Performing a projection onto $\hat y\in\mathcal D$, we find
	\begin{align}
		\forall \hat y\in\mathcal D:  (\hat y|\hat\rho)=\sum_{\hat x\in\mathcal D} p_x ( \hat y | \hat x ).
	\end{align}
	This defines a linear equation which has, according to Corollary \ref{Theo:ConvDec}, at least one solution with nonnegative probabilities, $p_x\geq0$ for all $x$.
	Therefore, we can directly conclude the following.

	\begin{corollary}\label{Theo:RecCO}
		We define $\vec g=[(\hat y|\hat\rho)]_{\hat y\in\mathcal D}$, $\vec p=[p_x]_{\hat x\in\mathcal D}$, and $G=[( \hat y | \hat x )]_{\hat x,\hat y\in\mathcal D}$.
		The linear system
		\begin{align*}
			G\vec p=\vec g
		\end{align*}
		has a solution $\vec p\geq 0$ (i.e., $p_x\geq0$ for all $\hat x$) and $\hat \rho=\sum_{\hat x\in\mathcal D} p_x \hat x$.
	\end{corollary}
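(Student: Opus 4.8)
The plan is to recognize that Corollary \ref{Theo:ConvDec} already carries the entire substantive weight, so that the present statement is obtained by merely recasting the convex decomposition it provides into the language of the Gram matrix. First I would invoke Corollary \ref{Theo:ConvDec} to secure nonnegative coefficients $(p_x)_{\hat x\in\mathcal D}$ satisfying
\begin{align*}
	\hat\rho=\sum_{\hat x\in\mathcal D}p_x\hat x,\qquad p_x\geq0 .
\end{align*}
This is the only place where nontrivial content enters, and it is inherited directly from the preceding results, namely Lemma \ref{Lem:PMPoints} (the stationary points form an exhaustive set) together with Lemma \ref{Lem:ZeroGen} (exhaustiveness forces $0$ into the convex hull).

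Next I would project this decomposition onto each $\hat y\in\mathcal D$ using the Hilbert-Schmidt product \eqref{eq:HSproduct}. Exploiting linearity of the functional $(\hat y|\,\cdot\,)$,
\begin{align*}
	(\hat y|\hat\rho)=\Bigl(\hat y\Bigm|\sum_{\hat x\in\mathcal D}p_x\hat x\Bigr)=\sum_{\hat x\in\mathcal D}p_x\,(\hat y|\hat x)
\end{align*}
holds for every $\hat y\in\mathcal D$. Reading the left-hand side as the $\hat y$ entry of $\vec g$ and the right-hand side as the $\hat y$ row of $G\vec p$, this collection of scalar equations is precisely the matrix identity $G\vec p=\vec g$. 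Since one and the same vector $\vec p$ is nonnegative, solves the linear system, and reproduces $\hat\rho$, all assertions of the corollary are simultaneously verified for this choice.

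I do not expect a genuine obstacle here, because existence was already settled upstream; the only care required is to keep track of \emph{which} solution is being claimed. In general the Gram matrix $G$ is merely positive semidefinite, so whenever the elements of $\mathcal D$ are linearly dependent the system $G\vec p=\vec g$ may possess additional solutions, not all of which need be nonnegative or reconstruct $\hat\rho$. The corollary makes no uniqueness claim; it asserts only that among the solutions there is at least one lying in the nonnegative orthant and yielding $\hat\rho=\sum_{\hat x\in\mathcal D}p_x\hat x$, namely the convex combination furnished by Corollary \ref{Theo:ConvDec}. If one further wished $G$ to be invertible, and hence the reconstructed distribution unique, the natural route would be to restrict $\mathcal D$ to a linearly independent subfamily, invoking the freedom left open by the closing clause of Corollary \ref{Theo:ConvDec} that any subset whose translate $\mathcal D_0$ stays exhaustive still suffices.
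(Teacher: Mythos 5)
Your proposal is correct and mirrors the paper's own argument exactly: it invokes Corollary \ref{Theo:ConvDec} for a nonnegative convex decomposition and then projects onto each $\hat y\in\mathcal D$ with the Hilbert--Schmidt product to obtain the system $G\vec p=\vec g$. Your closing remarks on non-uniqueness and the invertibility of $G$ also match the paper's own observations following the corollary.
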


	Note that $G$ is the symmetric, positive-semidefinite Gram-Schmidt matrix.
	A solution of the linear problem can be obtained, e.g., with the simplex algorithm.
	In the case in which the kernel of $G$ is empty, $G$ is invertible and $\vec p$ is unique, $\vec p=G^{-1}\vec g$.

\subsection{Concluding remarks}

	After this rigorous treatment let us conclude this section with some practical observations.
	Let us describe how to compute the stationary points and obtain the probability distribution.
	For this reason, it is worth noting that the function $\hat x\mapsto(\hat x|\hat x)=\|\hat x\|^2$ [cf. Eq. \eqref{eq:HSproduct}] is differentiable.
	Further, let us assume that elements of $\mathcal C$ are given through the differentiable function $t\mapsto\hat \Gamma(t)\in\mathcal C$, where $t$ represents a (possibly piecewise and multidimensional) parametrization of those elements.
	From Definition \ref{Def:StatPoint} it follows that we can obtain the stationary points from the optimization of the function
	\begin{align}
		\|\hat \rho-\hat\Gamma(t)\|^2=(\hat\rho|\hat\rho)-2(\hat\rho|\hat\Gamma(t))+(\hat\Gamma(t)|\hat\Gamma(t)).
	\end{align}
	This means the derivative $\partial_t$ of this expression has to be zero.
	Classical pure states yield $(\hat\Gamma(t)|\hat\Gamma(t))=\mathrm{const.}$ because of $\mathrm{tr}(|c\rangle\langle c|c\rangle\langle c|)=1$.
	Thus, $\partial_t \|\hat \rho-\hat \Gamma(t)\|^2=0$ is equivalent to
	\begin{align}
		\label{eq:OptProbl}
		\partial_t (\hat \rho|\hat \Gamma(t))=0.
	\end{align}
	The resulting solutions for all parametrizations, in addition to isolated points, yields all stationary points, defining the set $\mathcal D$.

	So far we focused on the construction of a convex decomposition of a classical state $\hat\rho\in\mathrm{conv}(\mathcal C)$.
	Now let us consider nonclassical states $\hat\rho\notin\mathrm{conv}(\mathcal C)$.
	In this case, we can also compute stationary points.
	However, we get $\vec p\ngeq 0$ or $\hat\rho_\mathrm{res.}\neq0$ (cf. Fig. \ref{fig:SchemeDecomposition}).
	In the former case, this means that the linear equation in Corollary \ref{Theo:RecCO} has no positive solution.
	In the latter case, the reconstructed state does not coincide with the actual state.
	That is, the residual component, given by
	\begin{align}
		\label{eq:RC}
		\hat \rho_\mathrm{res.}=\hat \rho-\sum_{|c\rangle\langle c|\in\mathcal D} p_c |c\rangle\langle c|,
	\end{align}
	does not vanish.

	Therefore, the derived approach yields a constructive as well as necessary and sufficient criterion for identifying quantum coherences.
	After this general treatment, let us apply our method to various examples for characterizing quantum coherence and quantum correlations in different systems.

\section{Applications}\label{Sec:Apps}

\subsection{Spectral decomposition}\label{Subsec:Spectral}

	As a consistency check, let us assume that $\mathcal C$ contains all pure states.
	Thus, we can define a parametrization which maps any vector to a pure state, $t=\langle \psi|\mapsto\hat\Gamma(t)=|\psi\rangle\langle\psi|/\langle \psi|\psi\rangle$.
	Consequently, using the definition \eqref{eq:HSproduct} of the Hilbert-Schmidt scalar product, the optimization problem in Eq. \eqref{eq:OptProbl} reads
	\begin{align}
		0=\partial_{\langle\psi|}\frac{\langle\psi|\hat\rho|\psi\rangle}{\langle\psi|\psi\rangle}=-\frac{\hat\rho|\psi\rangle}{\langle\psi|\psi\rangle}+\frac{\langle \psi|\hat\rho|\psi\rangle|\psi\rangle}{\langle\psi|\psi\rangle^2}.
	\end{align}
	Identifying $g=\langle\psi|\hat\rho|\psi\rangle/\langle\psi|\psi\rangle$, this condition is identical to the eigenvalue problem of the density operator
	\begin{align}
		\hat\rho|\psi\rangle=g|\psi\rangle.
	\end{align}

	Thus, an orthonormal eigenbasis $\{|\psi_n\rangle|\}_n$ defines the set of stationary points $\mathcal D=\{|\psi_n\rangle\langle \psi_n|\}$.
	Consequently, we get for the definitions in Corollary \ref{Theo:RecCO} the Gram-Schmidt matrix $\boldsymbol G=[\mathrm{tr}(|\psi_n\rangle\langle\psi_n|\psi_m\rangle\langle\psi_m|)]_{m,n}=\mathrm{diag}(1,1,\ldots)$ and the vector of the nonnegative eigenvalues $\vec g=[\langle\psi_n|\hat\rho|\psi_n\rangle]_n=[g_n]_n$.
	Now the only solution of $\boldsymbol G\vec p=\vec g$ is obtained when $p_n=g_n$ for all $n$.

	Thus, the spectral decomposition \eqref{eq:SpectralDecomposition}, a cornerstone of many problems in mathematical physics, is just a special case of our general treatment in which $\mathcal C$ contains all pure states.

\subsection{Orthonormal qudit states}

	A frequently studied example of classical states is given by orthonormal basis vectors of a qudit system $\{|j\rangle\}_{j=0,\ldots,d-1}$.
	Then the set $\mathcal C=\{|j\rangle\langle j|:j=0,\ldots,d-1\}$ of classical states only includes isolated points.
	For example, a classical bit ($d=2$) is restricted to $|0\rangle\langle 0|$ and $|1\rangle\langle 1|$ corresponding to the classical truth values ``false'' and ``true,'' respectively.

	Since any isolated point is also stationary, we can identify $\mathcal D=\mathcal C$ in the scenario under study.
	Applying our technique, we get $\vec p=(p_0,\ldots,p_{d-1})^\mathrm{T}=[\langle j|\hat\rho|j\rangle]_{j=0,\ldots,d-1}$, which is a nonnegative vector.
	However, we observe that the residual component
	\begin{align}
		\hat \rho_\mathrm{res.}=\hat\rho-\sum_{j=0}^{d-1} \langle j|\hat\rho|j\rangle |j\rangle\langle j|
	\end{align}
	vanishes if and only if $\hat\rho$ is diagonal in $|j\rangle\langle j|$.
	In other words, the classical states take the form $\hat\rho=\sum_{j=0}^{d-1} p_j |j\rangle\langle j|$.
	This is certainly not surprising.
	However, again, it demonstrates the general function of our method.

	As an example, let us assume the state $\hat\rho=|\phi\rangle\langle\phi|$, using the coherent superposition $|\phi\rangle=(|0\rangle+\cdots+|d-1\rangle)/\sqrt d$.
	From Eq. \eqref{eq:RC} we obtain a residual component with
	\begin{align}
		\|\hat\rho_\mathrm{res.}\|^2=\mathrm{tr}\left[\left(|\phi\rangle\langle\phi|-\frac{\hat 1}{d}\right)^2\right]=1-\frac{1}{d}>0,
	\end{align}
	which confirms the quantum nature.
	This residual component yields the Hilbert-Schmidt distance of the state to the set of classical ones and can be used as a quantifier of quantum coherence for the considered notion of classicality \cite{SAP17}.

	The findings in this and the previous example might not be surprising.
	However, we made the claim that out method universally applies to discrete-variable systems.
	Thus, we find that it is important to confirm that known results are retrieved.
	With the first two examples, we successfully demonstrate that this can be easily achieved with our approach.

\subsection{True, false, and undecidable}

	As a more sophisticated example, let us assume that the classical states in a qubit system are given by
	\begin{align}
	\begin{aligned}
		\mathcal C=\{|0\rangle\langle 0|,|1\rangle\langle 1\}\cup\left\{|\varphi\rangle\langle\varphi|:0\leq \varphi<2\pi\right\},
		\\
		\text{where }|\varphi\rangle=\frac{|0\rangle+e^{i\varphi}|1\rangle}{\sqrt 2}.
	\end{aligned}
	\end{align}
	This can be compared to a classical ternary logic, which consists of the isolated states ``false'' ($|0\rangle$) and ``true'' ($|1\rangle$) and is extended by including ``undecidable'' states.
	The latter states $|\varphi\rangle$ have an equal chance of being true or false and form a continuum.
	In the Bloch-sphere representation, the convex set of classical states defines a double cone structure (see Fig. \ref{fig:Qubit}).
	For realistic problems in classical computation, the introduction of undecidable states is one way, for example, to allow for the possibility to indicate scenarios in which insufficient data are available to make a definitive decision.

\begin{figure}
	\includegraphics[width=\columnwidth]{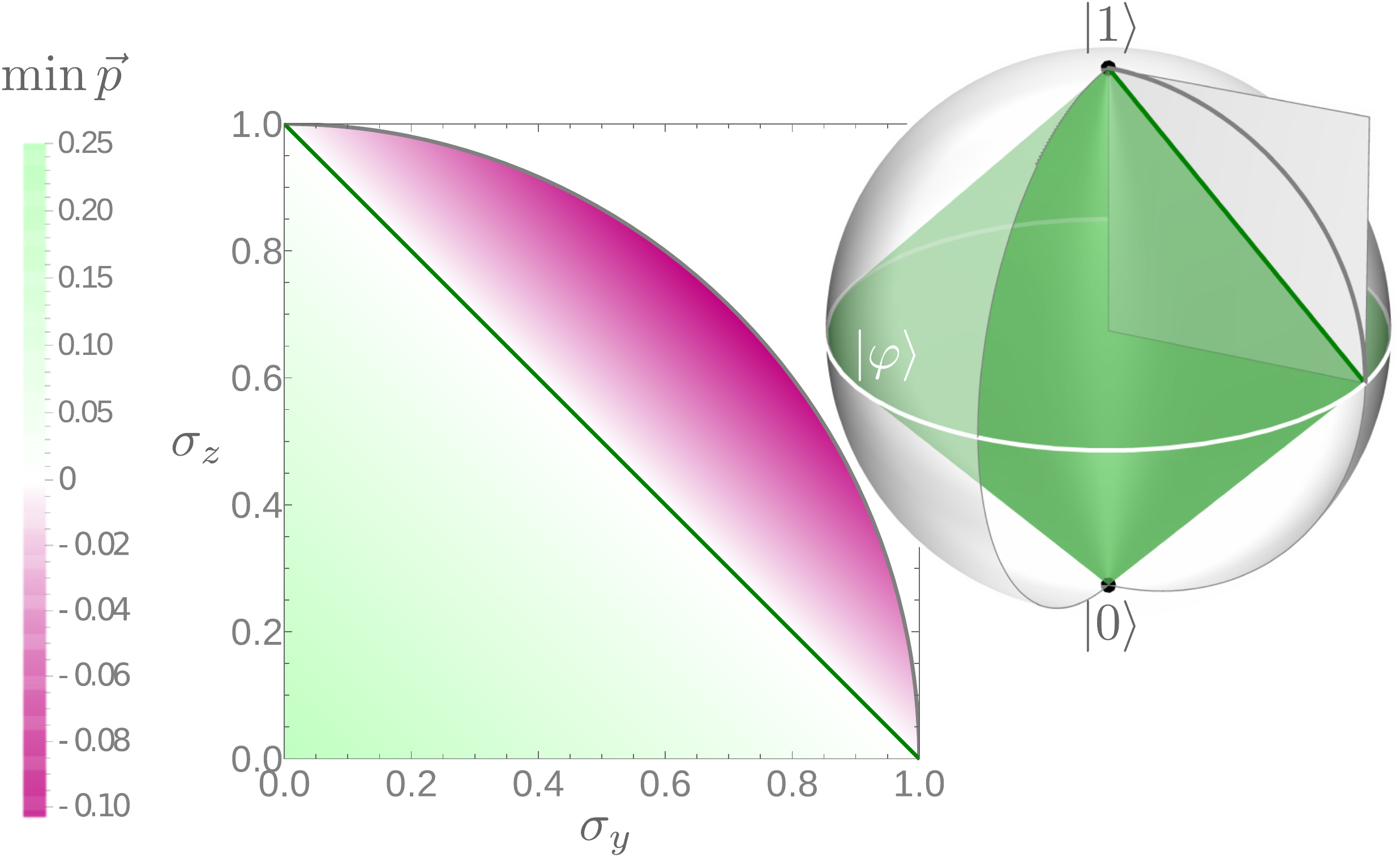}
	\caption{
		The green volume within the Bloch sphere (right) determines all classical states.
		The left plot depicts the minimal element of the quasiprobability $\vec p$ [cf. Eqs. \eqref{eq:QubitSolution1} and \eqref{eq:QubitSolution2}] for any cross section of the Bloch sphere, because of symmetry.
		The states in the lower triangle are decomposed in terms of a non-negative $\vec p$, $\min\vec p\geq 0$.
		The remaining states are identified as nonclassical, $\min\vec p<0$ (magenta).
	}\label{fig:Qubit}
\end{figure}

	A general qubit state can be decomposed as
	\begin{align}
	\begin{aligned}
		\hat\rho
		=&p|0\rangle\langle 0|+(1-p)|1\rangle\langle 1|
		\\
		&+\sqrt{p(1-p)}\Big(
			\gamma |1\rangle\langle 0|
			+\gamma^\ast |1\rangle\langle 0|
		\Big),
	\end{aligned}
	\end{align}
	with $|\gamma|\leq 1$ and $0\leq p\leq 1$.
	The stationary states are the isolated points $|0\rangle$ and $|1\rangle$ as well as the contributions $(|0\rangle+ e^{i\arg\gamma}|1\rangle)/\sqrt 2$ and $(|0\rangle- e^{i\arg\gamma}|1\rangle)/\sqrt 2$ for $|\gamma|\neq0$ of the continuous part obtained from solving $\partial_\varphi\langle\varphi|\hat\rho|\varphi\rangle=0$ [Eq. \eqref{eq:OptProbl}].
	Applying Corollary \ref{Theo:RecCO} and using the order $\{|0\rangle,|1\rangle,|\arg \gamma\rangle, |\arg \gamma+\pi\rangle\}$ for the stationary points, we get the matrix
	\begin{align}
		G=\begin{pmatrix}
			1 & 0 & 1/2 & 1/2 \\
			0 & 1 & 1/2 & 1/2 \\
			1/2 & 1/2 & 1 & 0 \\
			1/2 & 1/2 & 0 & 1
		\end{pmatrix},
	\end{align}
	which can be characterized through its eigenvectors $\vec e_\mathrm{norm.}=(1,1,1,1)^\mathrm{T}$, $\vec e_\mathrm{iso.}=(1,-1,0,0)^\mathrm{T}$, $\vec e_\mathrm{cont.}=(0,0,1,-1)^\mathrm{T}$, and $\vec e_\mathrm{ker.}=(1,1,-1,-1)^\mathrm{T}$ to the corresponding eigenvalues $2$, $1$, $1$, and $0$, respectively.
	Further, we obtain from Corollary \ref{Theo:RecCO} the vector
	\begin{align}
		&\vec g=\begin{pmatrix}
			p \\ 1-p \\ 1/2+\sqrt{p(1-p)}|\gamma| \\ 1/2-\sqrt{p(1-p)}|\gamma|
		\end{pmatrix}
		\\\nonumber
		=&\frac{1}{2}\vec e_\mathrm{norm.}+\frac{2p-1}{2}\vec e_\mathrm{iso.}+\sqrt{p(1-p)}|\gamma|\vec e_\mathrm{cont.}.
	\end{align}

	Solving the linear equation $G\vec p=\vec g$, we find
	\begin{align}
		\label{eq:QubitSolution1}
		&\vec p=\begin{pmatrix}
			p+r \\ 1-p+r \\
			\sqrt{p(1-p)}|\gamma|-r \\ -\sqrt{p(1-p)}|\gamma|-r
		\end{pmatrix}
		\\\nonumber
		=&\frac{1}{4}\vec e_\mathrm{norm.}{+}\frac{2p{-}1}{2}\vec e_\mathrm{iso.}
		{+}\sqrt{p(1{-}p)}|\gamma|\vec e_\mathrm{cont.}{+}\left[r+\frac{1}{4}\right] \vec e_\mathrm{ker.}
	\end{align}
	for any $r\in\mathbb R$.
	Note that $\hat\rho_\mathrm{res.}=0$ hods true as a consequence of the fact that the classical states have a nonempty volume in the Bloch sphere of all states.
	The found solution implies that a nonnegative $\vec p$ exists if and only if we can choose $r$ such that $\min\{p,1-p\}\geq -r\geq \sqrt{p(1-p)}|\gamma|$.
	Thus, for instance, we can set
	\begin{align}
		\label{eq:QubitSolution2}
		r=-\frac{\min\{p,1-p\}+\sqrt{p(1-p)}|\gamma|}{2},
	\end{align}
	which yields a nonnegative $\vec p$ if possible.
	Therefore, we obtain the nonclassicality of all qubit states in the considered scenario, which is shown in Fig. \ref{fig:Qubit} and is directly achieved by applying our method.

\subsection{Angular momentum coherent states}\label{Subsec:1Spin}

	As an example motivated by physics, a spin-$s$ system is considered with a total spin $s\in\mathbb N/2$.
	If $s$ is a half-integer or integer, we have fermionic or bosonic quantum characteristics, respectively, which is of particular interest when studying quantum correlations (cf. Sec. \ref{Subsec:2Spin}).
	The operator components of the spin satisfy $[\hat S_x,\hat S_y]=i\hbar\hat S_z$ and cyclic permutations thereof.
	The spectral decomposition, for example, of the $z$ component reads $\hat S_z=\hbar\sum_{m=-s}^s m|m\rangle\langle m|$.
	Using ladder operators, $\hat S_\pm |m\rangle=\hbar\sqrt{(s\mp m)(s\pm m+1)}|m\pm 1\rangle$, the $x$ and $y$ components can be written as $\hat S_x=(\hat S_++\hat S_-)/2$ and $\hat S_y=(\hat S_+-\hat S_-)/(2i)$.
	Furthermore, the classical spin states are defined as
	\begin{align}
	\label{eq:SpinClass}
	\begin{aligned}
		|\vartheta,\varphi\rangle
		=\sum_{m=-s}^s&\binom{2s}{s+m}^{\frac{1}{2}}
		\left[\cos\frac{\vartheta}{2}\right]^{s+m}
		\left[\sin\frac{\vartheta}{2}\right]^{s-m}
		e^{-im \varphi}
		|m\rangle,
		\\
	\end{aligned}
	\end{align}
	for $0\leq\vartheta\leq\pi$ and $0\leq\varphi<2\pi$.
	In addition, we have $\langle\vartheta,\varphi|\hat S_z|\vartheta,\varphi\rangle=\hbar s\cos\vartheta$ and $\langle\vartheta,\varphi|\hat S_\pm|\vartheta,\varphi\rangle=\hbar s\, e^{\pm i\varphi} \sin\vartheta$.
	Note that for $\vartheta\in\{0,\pi\}$, the angle $\varphi$ becomes irrelevant.
	Originally such states were introduced and studied as angular momentum coherent states \cite{AD71,L84} for generalizing quantum-optical concepts to angular-momentum-based algebras.
	A convex mixture of classical spin states then defines the notion of an incoherent states in such a scenario \cite{GBB08}.

	To apply our technique, we consider the pure superposition state $\hat\rho=|\psi\rangle\langle \psi|$, with
	\begin{align}
		|\psi\rangle=\frac{|-s\rangle+|s\rangle}{\sqrt 2},
	\end{align}
	where $s\geq 1$, to exclude trivial cases.
	The optimization of $\langle\vartheta,\varphi|\hat\rho|\vartheta,\varphi\rangle$ over $\vartheta$ and $\varphi$ can be done straightforwardly.
	This yields the solutions $\vartheta\in\{0,\pi\}$, resulting in the states $|{\pm}s\rangle$, as well as $\vartheta=\pi/2$ and $\varphi_n=\pi n/(2s)$ for $n=0,\ldots,4s-1$, which corresponds to the states
	\begin{align}
		|\pi/2,\varphi_n\rangle
		=&\frac{1}{2^s}
		\sum_{m=-s}^{s}\sqrt{\frac{(2s!)}{(s+m)!(s-m)!}} e^{-i\pi m n/(2s)}|m\rangle.
	\end{align}
	After some algebra, one obtains the quasiprobabilities
	\begin{align}
		\label{eq:spinQP}
		p_{\vartheta=0}=p_{\vartheta=\pi}=1/2
		\text{ and }
		p_{\vartheta=\pi/2,\varphi_n}=(-1)^n\frac{2^{2s}}{8s},
	\end{align}
	where the latter are negative for odd integers $n$.
	Note that we have a unique decomposition as $G$ is invertible and we have $\hat\varrho_\mathrm{res.}=0$.

	In Fig. \ref{fig:spin} the quasiprobabilities are shown over the spin-$s$ phase space, which is defined by the Poincar\'e sphere and must not be confused with the Bloch sphere for qubits.
	The quasiprobability (radial component) for the classical states in Eq. \eqref{eq:SpinClass} is shown as a function of the angles $\vartheta$ and $\varphi$.
	The top (bottom) plot shows the quasiprobability of the state $|\psi\rangle\langle\psi|$ under study for a boson (fermion).
	The negativities in the equatorial plane confirm the nonclassical character [Eq. \eqref{eq:spinQP}].

	Again, we are able to directly apply our technique to unambiguously confirm the quantum characteristic of states in this physical system, which is only remotely related to the previous example inspired by quantum information.

\begin{figure}
	\includegraphics[width=\columnwidth]{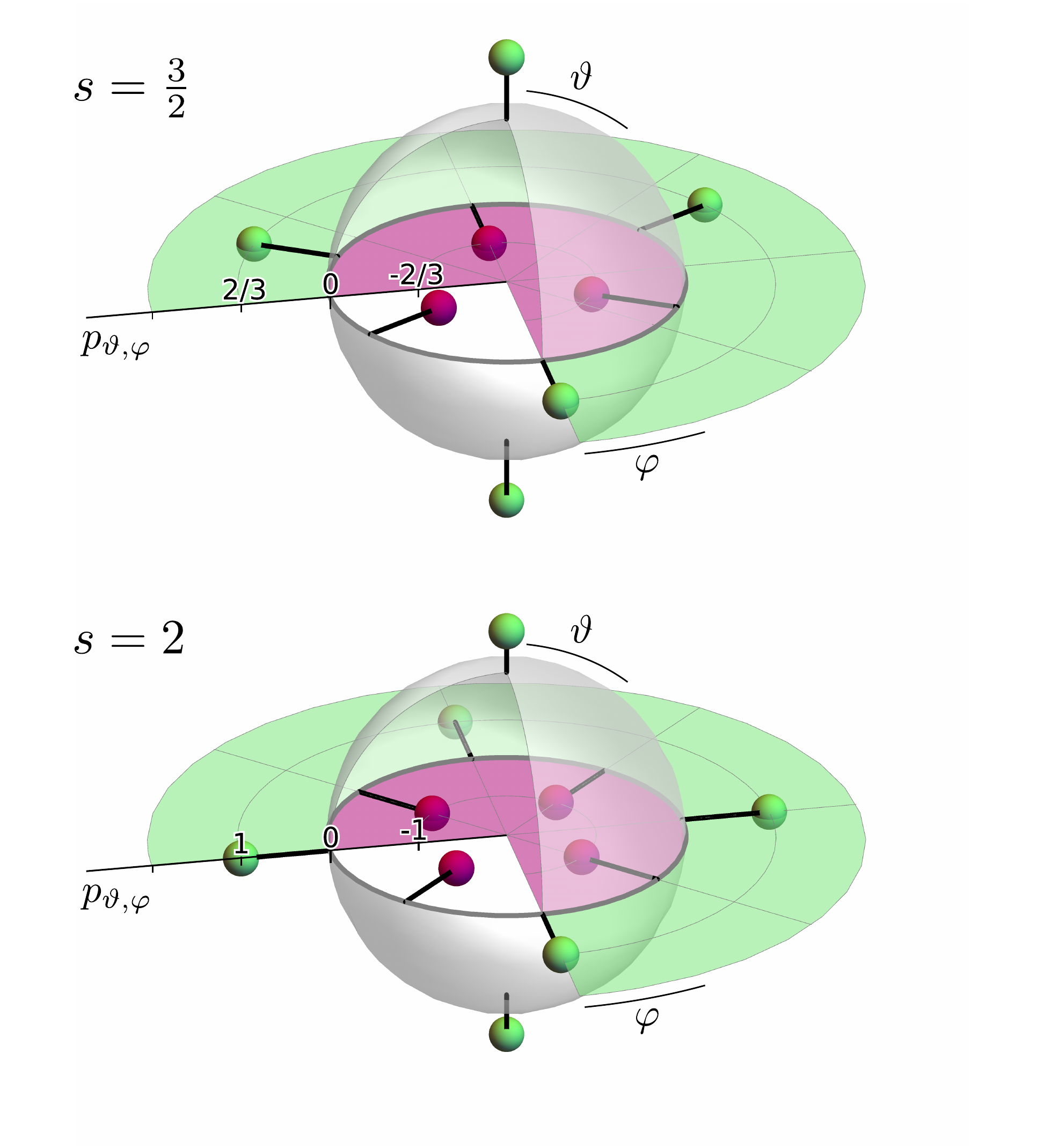}
	\caption{
		Quasiprobabilities over the Poincar\'e-sphere phase space.
		The polar angel is $\vartheta$ and the azimuthal angle is $\varphi$.
		Both define the classical states $|\vartheta,\varphi\rangle\langle\vartheta,\varphi|$ [Eq. \eqref{eq:SpinClass}].
		The radius over the sphere is used to show the quasiprobability $p_{\vartheta,\varphi}$ (bullet points) of the corresponding state; negative and positive values point inside (magenta) and outside (green), respectively.
		The negativities (magenta bullets) confirm the nonclassicality of the state $|\psi\rangle=(|s\rangle+|-s\rangle)/\sqrt2$.
	}\label{fig:spin}
\end{figure}

\section{Application to correlations}\label{Sec:Corr}

\subsection{Bipartite entanglement}\label{Subsec:BipartEnt}

	So far, we focused on the nonclassical characteristics of a single system.
	When studying quantum correlations between two systems, the set $\mathcal C$ of pure classically correlated states is formed by tensor-product states $|a,b\rangle=|a\rangle\otimes|b\rangle$ \cite{SAWV17}.
	Let us assume that all local states $|a\rangle$ and $|b\rangle$ of the subsystems $A$ and $B$ are classical.
	Then statistical mixtures of such states [Eq. \eqref{eq:IncoherentDecomposition}] define classically correlated states, termed separable states \cite{W89}.
	Thus, a nonclassical state is entangled in this case (see Ref. \cite{HHHH09} for an introduction to quantum entanglement).

	Similarly to the example for the spectral decomposition (Sec. \ref{Subsec:Spectral}), a map $\hat\Gamma(t)=(|a\rangle\langle a|/\langle a|a\rangle)\otimes(|b\rangle\langle b|/\langle b|b\rangle)$ is defined through the parameter $t=(\langle a|,\langle b|)$.
	In close analogy to the eigenvalue problem of the density operator, the optimization of $(\hat \Gamma(t)|\hat\rho)$ over the two components of $t$ results in the so-called separability eigenvalue equations,
	\begin{align}
		\label{eq:SEE}
		\hat\rho_b|a\rangle=g|a\rangle
		\text{ and }
		\hat\rho_a|b\rangle=g|b\rangle,
	\end{align}
	with the reduced operators $\hat\rho_a=(\langle a|\otimes\hat 1_B)\hat\rho(|a\rangle\otimes\hat 1_B)$ and $\hat\rho_b=(\hat 1_A\otimes \langle b|)\hat\rho(1_A\otimes|b\rangle)$.
	The thorough derivation of those equations was done in Ref. \cite{SV09}, where this approach was used to construct entanglement witnesses.

	It is well known that any quantum state can be written as a quasi-mixture of separable states (see, e.g., Ref. \cite{STV98}).
	Thus, one necessarily obtains $\hat\rho_\mathrm{res.}=0$.
	We previously constructed quasiprobabilities for bipartite entanglement based on Eq. \eqref{eq:SEE} \cite{SV09quasi}.
	Again, our method derived here provides a general framework, which includes such previous results as a special case.
	Existing examples of studied states are quantum-optical two-mode squeezed states \cite{SV12}, NOON states in quantum metrology \cite{BSV17}, and Werner states \cite{TBV17}.
	It is also noteworthy that a generalization of Eq. \eqref{eq:SEE} allows us to identify entanglement between quantum trajectories \cite{SW17}.

	In the following, we study further examples.
	Moreover, the previous results have been limited to bipartite entanglement.
	Based on our general approach, we are now able to address multipartite systems too (Sec. \ref{Subsec:MultipartEnt}).

\subsection{Complex probability amplitudes}\label{Subsec:Redits}

	To study combinations of local and nonlocal notions of nonclassicality let us consider the concept of a ``redit,'' real qudit.
	In this context, a pure state is classical if it can be written as real-valued vector in the computational basis.
	Thus, the set $\mathcal C$ of pure redit states $|c\rangle\langle c|$ is defined via 
	\begin{align}
		|c\rangle=\sum_{j=0}^{d-1}\gamma_j|j\rangle,
		\text{ where }\gamma_j\in\mathbb R
		\text{ for all }j.
	\end{align}
	The observation that there are states beyond redits is a consequence of the fact that quantum physics is described in terms of complex probability amplitudes $\gamma_j$ rather than real-valued vectors as one may expects from classical field theories.
	As we will visualize with our approach, interferences stemming from superpositions in a real vector space do not account for all phenomena which are described in complex spaces.
	Thus, the identification of correlations beyond redits confirms the fundamental axiom of quantum physics that a complex Hilbert space is required to describe quantum-physical states.

	In general, the Hermitian density operator can be decomposed into two real-valued matrices
	\begin{align}
		\hat\rho=\hat\rho_\mathrm{Re}+i\hat\rho_\mathrm{Im},
	\end{align}
	where $\hat\rho_\mathrm{Re}=(\hat\rho+\hat\rho^\ast)/2=\hat\rho_\mathrm{Re}^\mathrm{T}$ and $\hat\rho_\mathrm{Im}=(\hat\rho+\hat\rho^\ast)/(2i)=-\hat\rho_\mathrm{Im}^\mathrm{T}$.
	Further, we find for the classical states $\hat c=|c\rangle\langle c|$ that
	\begin{align}
		\mathrm{tr}(\hat c\hat\rho_\mathrm{Im})
		=\mathrm{tr}((\hat c\hat\rho_\mathrm{Im})^\mathrm{T})
		=-\mathrm{tr}[\hat\rho_\mathrm{Im}\hat c]
		=-\mathrm{tr}(\hat c\hat\rho_\mathrm{Im}).
	\end{align}
	This means that $(\hat c|\hat\rho_\mathrm{Im})=0$.
	Consequently, applying our approach to the single-mode scenario, we find that the stationary states are obtained via the eigenvalue problem of the real part of the density operator
	\begin{align}
		\label{eq:EEreal}
		\hat\rho_\mathrm{Re}|c\rangle=g|c\rangle.
	\end{align}
	This implies that the spectral decomposition over the real-valued vectors yields probabilities which are nonnegative.
	Still, nonclassicality can be inferred from the residual operator, $\hat\rho_\mathrm{res.}=\hat\rho-\hat\rho_\mathrm{Re}=i\hat\rho_\mathrm{Im}$.

	For studying correlations, we can combine the approaches in Eqs. \eqref{eq:SEE} and \eqref{eq:EEreal} to study the case of separable and real-valued states.
	As an example, we consider the two-qubit state
	\begin{align}
		\label{eq:2QubitFamily}
		\hat\rho=\frac{\hat \sigma_0\otimes\hat\sigma_0+\sum_{w\in\{z,x,y\}}\rho_w\hat\sigma_w\otimes\hat\sigma_w}{4},
	\end{align}
	where the $\hat \sigma_w$ denote Pauli matrices for $w\in\{x,y,z\}$ \cite{comment:Pauli}.
	This four-dimensional density operator is real-valued with the eigenvalues $(1+\rho_z+\rho_x-\rho_y)/4$, $(1+\rho_z-\rho_x+\rho_y)/4$, $(1-\rho_z+\rho_x+\rho_y)/4$, and $(1-\rho_z-\rho_x-\rho_y)/4$.
	Consequently, the parameter space of physical quantum states in Eq. \eqref{eq:2QubitFamily} is the tetrahedron spanned by $(\rho_x,\rho_y,\rho_z)\in\{(-1,-1,-1),(-1,1,1),(1,-1,1),(1,1,-1)\}$.

	The complex form of the separability eigenvalue equations \eqref{eq:SEE} for the density operator under study is solved by
	\begin{align}
		|a\rangle\otimes |b\rangle
		=
		|w_{\pm_A}\rangle\otimes|w_{\pm_B}\rangle,
	\end{align}
	for $w\in\{z,x,y\}$ and where $|w_\pm\rangle$ are the eigenvectors of the Pauli matrix $\hat\sigma_w$ \cite{comment:Pauli}.
	The signs $\pm_A$ and $\pm_B$ can be chosen arbitrarily for the subsystems.
	In the scenario of real-valued product vectors, i.e., solving the separability eigenvalue problem for real numbers, we obtain the same solutions with the restriction to $w\in\{z,x\}$.

	Consequently, the latter set of solutions allows us to obtain the probability vector for $\mathbb R$,
	\begin{align}
		\vec p=\begin{pmatrix}
			p_{z,+,+}\\p_{z,+,-}\\p_{z,-,+}\\p_{z,-,-}\\
			p_{x,+,+}\\p_{x,+,-}\\p_{x,-,+}\\p_{x,-,-}
		\end{pmatrix}
		=\frac{1}{8}\begin{pmatrix}
			1\\1\\1\\1\\1\\1\\1\\1
		\end{pmatrix}
		+\frac{1}{4}\begin{pmatrix}
			\rho_z\\-\rho_z\\-\rho_z\\\rho_z\\
			\rho_x\\-\rho_x\\-\rho_x\\\rho_x
		\end{pmatrix}
		+\begin{pmatrix}
			r\\r\\r\\r\\-r\\-r\\-r\\-r
		\end{pmatrix},
	\end{align}
	where the indices are arranged according to $w,\pm_A,\pm_B$ and a free parameter $r$.
	We find that a positive solution $\vec p$ is possible if and only if
	\begin{align}
		r\leq \frac{1}{8}-\frac{|\rho_x|}{4}
		\text{ and }
		r\geq -\frac{1}{8}+\frac{|\rho_z|}{4}
	\end{align}
	can be satisfied.
	For instance, the mean value of the upper and lower bound, $r=(|\rho_z|-|\rho_x|)/8$, ensures a positive $\vec p$ if possible.
	Moreover, the residual component for the real numbers is $\hat\rho_\mathrm{res.}=\rho_y\hat\sigma_y\otimes\hat\sigma_y/4$.

	For the case $\mathbb C$, we append the $y$ components and get
	\begin{align}
		\vec p
		=\frac{1}{12}\begin{pmatrix}
			1\\1\\1\\1\\1\\1\\1\\1\\1\\1\\1\\1
		\end{pmatrix}
		+\frac{1}{4}\begin{pmatrix}
			\rho_z\\-\rho_z\\-\rho_z\\\rho_z\\
			\rho_x\\-\rho_x\\-\rho_x\\\rho_x\\
			\rho_y\\-\rho_y\\-\rho_y\\\rho_y
		\end{pmatrix}
		+\begin{pmatrix}
			r+q\\r+q\\r+q\\r+q\\
			-r\\-r\\-r\\-r\\
			-q\\-q\\-q\\-q
		\end{pmatrix},
	\end{align}
	for arbitrary $r$ and $q$.
	Now a positive solutions exists if and only if the inequalities
	\begin{align}
	\begin{aligned}
		r\leq \frac{1}{12}-\frac{|\rho_x|}{4},
		\text{ }
		q\leq \frac{1}{12}-\frac{|\rho_y|}{4},
		\text{ and }
		r+q\geq -\frac{1}{12}+\frac{|\rho_z|}{4}
	\end{aligned}
	\end{align}
	simultaneously hold true.
	Similarly to the previous scenario, we can make the symmetric choice $r=(|\rho_z|+|\rho_y|-2|\rho_x|)/12$ and $q=(|\rho_z|+|\rho_x|-2|\rho_y|)/12$ to obtain $\vec p\geq 0$ if possible.
	Further, we have $\hat\rho_\mathrm{res.}=0$ for the case $\mathbb C$.

	A specific state of the form in Eq. \eqref{eq:2QubitFamily} was studied in Ref. \cite{CFR01} and it is an interesting example of so-called bound entanglement for a rebit.
	For this state, we have $\rho_x=\rho_z=0$ and $\rho_y=1$.
	Then we get in the real-valued case that $\vec p\geq 0$, but $\hat\rho_\mathrm{res.}\neq 0$ certifying the nonclassical character, i.e., real-valued entanglement.
	By contrast, the complex scenario yields a convex decomposition, verifying that the state is separable with respect to $\mathbb C$.
	Thus, our more general solutions directly confirm the predicted properties of this state \cite{CFR01}.

\begin{figure}
	\includegraphics[width=\columnwidth]{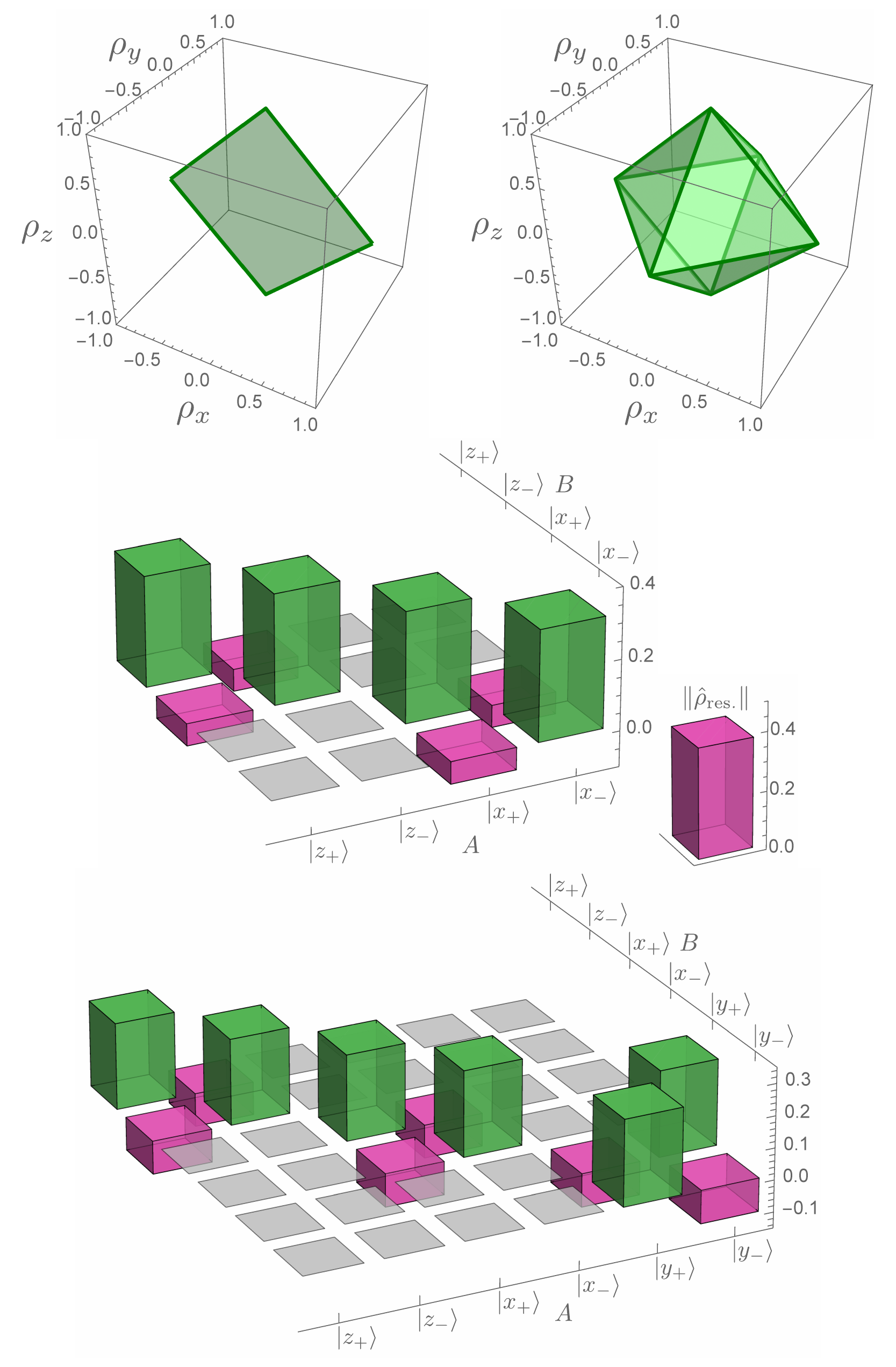}
	\caption{
		The top row shows the green regions depicting separable states over $\mathbb R$ (left) and $\mathbb C$ (right) for the family of states in Eq. \eqref{eq:2QubitFamily}.
		The middle row shows, for $\rho_x=-\rho_y=\rho_z=3/4$, the quasiprobabilities for $|a\rangle\otimes|b\rangle$ are shown---the $x$ and $y$ axes define the states in subsystems $A$ and $B$, respectively.
		The Hilbert-Schmidt norm of its residual component is depicted left for this rebit scenario.
		The bottom row shows, analogously, the qubit scenario ($\|\hat\rho_\mathrm{res.}\|=0$).
	}\label{fig:ReBit}
\end{figure}

	Beyond this specific example, we can consider the general case.
	Based on our approach, we can identify all pure and mixed classical states in the given parametrization [Eq. \eqref{eq:2QubitFamily}];
	the top row in Fig. \ref{fig:ReBit} shows the real-valued (complex-valued) separable parameter space on the left (right), which forms an octahedron (square).
	In addition, for the parameters $\rho_x=-\rho_y=\rho_z=3/4$, the quasiprobabilities are shown in the middle row of Fig. \ref{fig:ReBit} for the $\mathbb R$ case.
	The bottom row shows the $\mathbb C$ scenario in which the residual component of the previous case is resolved into the additional $y$ components.
	Note that the sign of $\rho_y$ is opposite to those of the $x$ and $z$ components, resulting in different quasiprobabilities.

	This study demonstrates that our approach renders it possible to perform an intuitive and joint characterization of entanglement together with the fact that quantum physics requires complex probability amplitudes in a unified framework.

\subsection{Bosons and fermions}\label{Subsec:2Spin}

	Apart from entanglement, there are other forms of quantum correlations emerging in composite systems.
	Previously, we studied spin states (Sec. \ref{Subsec:1Spin}).
	Here let us consider such states for a bipartite system together with the exchange symmetry for bosons and fermions.

	Again, we can apply our general approach to this scenario (cf. Corollaries \ref{Theo:ConvDec} and \ref{Theo:RecCO}).
	As an example, we study the state
	\begin{align}
		\label{eq:2particlespin}
		|\psi\rangle=\frac{1}{\sqrt 2}\left\{\begin{array}{ll}
			|s,-s\rangle+|-s,s\rangle & \text{ for integer $s$},
			\\
			|s,-s\rangle-|-s,s\rangle & \text{ for half-integer $s$},
		\end{array}\right.
	\end{align}
	where $s\geq1$.
	This state shares similarities with a NOON state, which is frequently applied in quantum metrology because of its advanced phase precision \cite{BKABWD00}.
	The optimization over the states $|\vartheta_A,\varphi_A\rangle\otimes|\vartheta_B,\varphi_B\rangle$ first results in the optimal states with polar angles $0$ and $\pi$, yielding $|s\rangle\otimes|s\rangle$, $|-s\rangle\otimes|s\rangle$, $|s\rangle\otimes|-s\rangle$, and $|-s\rangle\otimes|-s\rangle$.
	Second, we get the relation between the azimuthal angles
	\begin{align}
		\varphi_A=\frac{\pi n}{2s}+\varphi_B\text{ for }n\in\mathbb Z,
	\end{align}
	and the relation between the polar angles
	\begin{align}
		\vartheta_A=\left\lbrace\begin{array}{ll}
			\pi-\vartheta_B & \text{for $n+2s$ even},\\
			\vartheta_B & \text{for $n+2s$ odd}.
		\end{array}\right.
	\end{align}
	Thus, in analogy to the single system, we can chose $\vartheta_A=\vartheta_B=\pi/2$, $\varphi_A=\pi n_A/(2s)$ and $\varphi_B=\pi n_B/(2s)$ to get a finite and exhaustive subset of stationary points.
	Then we get
	\begin{align}
	\begin{aligned}
		p_{\vartheta_A=0,\vartheta_B=0}=p_{\vartheta_A=\pi,\vartheta_B=\pi}=&0,
		\\
		p_{\vartheta_A=0,\vartheta_B=\pi}=p_{\vartheta_A=\pi,\vartheta_B=0}=&\frac{1}{2},
		\\
		p_{\vartheta_A=\vartheta_B=\pi/2,\varphi_A=\pi n_A/(2s),\varphi_B=\pi n_B/(2s)}=&(-1)^{n_A+n_B+2s}\frac{2^{4s}}{64s},
	\end{aligned}
	\end{align}
	as well as $\hat\rho_\mathrm{res.}=0$

	The found quasiprobabilities are negative when $n_A+n_B+2s$ is an odd integer.
	This confirms that the state in Eq. \eqref{eq:2particlespin} is nonclassically correlated in the sense that it is not a convex combination of tensor products of classical spin states $|\vartheta_A,\varphi_A\rangle\otimes|\vartheta_B,\varphi_B\rangle$.
	Furthermore, it is obvious that we can write $|\psi\rangle\propto |0,\varphi_A\rangle\vee |\pi,\varphi_B\rangle$ for bosons (for integer $s$ and using the symmetric tensor product $\vee$) as well as $|\psi\rangle\propto |0,\varphi_A\rangle\wedge |\pi,\varphi_B\rangle$ for fermions (for half-integer $s$ and using the skew-symmetric tensor product $\wedge$).
	Thus, when including the exchange symmetry originating from the spin statistics, the state is indeed a classical symmetric or skew-symmetric tensor-product state.
	Consequently, we observe that the negativities of the quasiprobability are a result of the exchange symmetry in quantum physics.
	See also Ref. \cite{SPBW17} for a recent in-depth ana\-lysis of the interplay between local quantum coherence, entanglement, and the exchange symmetry of indistinguishable particles.

\subsection{Multipartite entanglement}\label{Subsec:MultipartEnt}

	In our study of quantum correlations, we focused on bipartite systems so far.
	In this last application, let us consider multipartite entanglement of a complex-valued $N$-partite system.
	The pure classical states are all $N$-fold tensor-product vectors $|\psi_1,\ldots,\psi_N\rangle=|\psi_1\rangle\otimes\cdots\otimes|\psi_N\rangle$.
	Similarly to the bipartite scenario, we obtain the stationary points of a state $\hat\rho$ in terms of a multipartite version of the separability eigenvalue equations \cite{SV13}
	\begin{align}
		\label{eq:NSeparabilityEigenvalue}
		\hat\rho_{\psi_1,\ldots,\psi_{j-1},\psi_{j+1},\ldots,\psi_N}|\psi_j\rangle=g|\psi_j\rangle
	\end{align}
	for $j=1,\ldots,N$ and the operators $\hat\rho_{\psi_1,\ldots,\psi_{j-1},\psi_{j+1},\ldots,\psi_N}=(\langle \psi_1,\ldots,\psi_{j-1}|\otimes\hat 1_j\otimes\langle \psi_{j+1},\ldots,\psi_N|)\hat\rho(|\psi_1,\ldots,\psi_{j-1}\rangle\otimes\hat 1_j\otimes|\psi_{j+1},\ldots,\psi_N\rangle)$, where $\hat 1_j$ is the identity of the $j$th subsystem.
	Since any operator can be expanded in terms of Hermitian product operators, the residual component has to vanish for any state, $\hat\rho_\mathrm{res.}=0$.
	Thus, our technique enables us to uncover multipartite entanglement of any state $\hat\rho$ in terms of quasiprobability distributions, $\vec p\ngeq 0$.

	An interesting example of a four-partite state was introduced by Smolin \cite{S01}, $\hat\rho=(\hat\sigma_0^{\otimes 4}+\hat\sigma_z^{\otimes 4}+\hat\sigma_x^{\otimes 4}+\hat\sigma_y^{\otimes 4})/16$.
	This state is bound entangled, which presents a ``weak'' form of quantum correlation.
	More generally, let us study the family of states \cite{AH06}
	\begin{align}
		\label{eq:GenSmolin}
		\hat\rho=\frac{1}{2^N}\left(\hat\sigma_0^{\otimes N}+\rho_z\hat\sigma_z^{\otimes N}+\rho_x\hat\sigma_x^{\otimes N}+\rho_y\hat\sigma_y^{\otimes N}\right),
	\end{align}
	for $N\geq 2$, which includes the Smolin state as a special case.

	The treatment of this class of state can be done analogously to the bipartite scenario considered previously (Sec. \ref{Subsec:Redits}).
	Note that we focus on complex-valued states here.
	We find the multipartite separability eigenvectors by solving Eq. \eqref{eq:NSeparabilityEigenvalue};
	they are \cite{comment:Pauli}
	\begin{align}
		\bigotimes_{j=1}^N(|\psi_j\rangle\langle\psi_j|)=\bigotimes_{j=1}^N\frac{\hat\sigma_0+s_j\hat\sigma_w}{2},
	\end{align}
	where $w\in\{z,x,y\}$ and for any choice of sign $s_j\in\{+1,-1\}$, resulting in $2^N$ combinations of signs for each $w$.
	Eventually, we obtain the quasiprobabilities as
	\begin{align}
	\begin{aligned}
		p_{z,s_1,\ldots,s_N}=&\frac{1}{3}\frac{1}{2^N}+s_1\cdots s_N\frac{\rho_z}{2^N}+(r+q),
		\\
		p_{x,s_1,\ldots,s_N}=&\frac{1}{3}\frac{1}{2^N}+s_1\cdots s_N\frac{\rho_x}{2^N}-r,
		\\
		p_{y,s_1,\ldots,s_N}=&\frac{1}{3}\frac{1}{2^N}+s_1\cdots s_N\frac{\rho_y}{2^N}-q.
	\end{aligned}
	\end{align}
	Setting $r=(-r'+1/3-|\rho_x|)/2^N$ and $q=(-q'+1/3-|\rho_y|)/2^N$, we get the constraints for classical probabilities, i.e., separability, $r'\geq0$, $q'\geq 0$, and $r'+q'\leq1-|\rho_z|-|\rho_x|-|\rho_y|$.
	For the upper bounds $r'=q'=0$, we get the necessary and sufficient condition that a generalized Smolin state in Eq. \eqref{eq:GenSmolin} is separable ($\vec p\geq0$) for any number of parties $N$,
	\begin{align}
		|\rho_z|+|\rho_x|+|\rho_y|\leq 1.
	\end{align}

	Conversely, the violation of this condition yields $\vec p\ngeq0$ and certifies multipartite entanglement.
	For instance, assume we mix the actual Smolin state with white noise $\eta\sum_{w\in\{0,z,x,y\}}\hat\sigma_w^{\otimes 4}/16+(1-\eta)\hat\sigma_0^{\otimes 4}/16$ (see the experimental realizations in Refs. \cite{AB09,BSGMCRHB10,LKPR10}).
	Then the above condition implies that as long as $\eta>1/3$ holds true, the state exhibits multipartite entanglement.

\section{Conclusion}\label{Sec:Conclusion}

	In summary, we developed a universal framework for the representation of quantum coherences in terms of quasiprobabilities.
	We proved that our construction yields a classical probability distribution for the convex expansion of any statistical mixture of pure classical states.
	This is important as the decomposition of quantum states is, in general, not unique and finding one negative distribution does not imply that no nonnegative distribution exists.
	Our approach overcomes this challenge even if we have a continuum of pure classical states.
	In particular, we have proven that the pure classical states with an optimal distance to the state under study enable a convex decomposition if possible.
	Thus, a state exhibits quantum coherences, i.e., it is nonclassical, if our reconstructed distribution includes negativities or does not expand the state under study, leading to a nonzero residual component not accessible with classical states.
	Therefore, our necessary and sufficient method provides an intuitive picture for certifying quantum phenomena and provides an optimal decomposition for mixed classical states in finite-dimensional spaces.
	Let us also mention that the direct generalization of our approach to any continuous-variable system is a nontrivial problem as the distributions can become highly irregular in some cases \cite{S16}.
	In addition to the derivation of our method, we applied our method to study various systems which are relevant for quantum physics and quantum information science.

	To demonstrate the general operation of our technique, we showed that the spectral decomposition of the density operator is just a special case of our general treatment when assuming that all pure states are classical.
	Then we analyzed a qudit system in which the classical states form an orthonormal basis.
	In this case, the computed distribution is always nonnegative and quantum coherence manifests itself exclusively via a non-vanishing residual component, which can be further used to quantify the nonclassicality.
	Moreover, for a single qubit, we studied the scenario where the classical states include not only the orthogonal states for true and false, but also their balanced superposition to represent a third truth value, termed undecidable.
	We computed the quasiprobabilities for all states in and on the Bloch sphere and found that nonclassicality exclusively stems from negativities in the optimal distribution as the residual component vanishes.
	Finally, we considered the phase-space representation of a quantum superposition state in a spin-$s$ system.
	This yields a quasiprobability distribution which directly visualizes the nonclassical character on the Poincar\'e sphere, the phase space of spin states.
	In fact, the so-called Holstein-Primakoff approximation \cite{HP40} for a large total spin yields a tangential plane to the Poincar\'e sphere which corresponds to the phase space of a continuous-variable harmonic oscillator used in quantum optics.

	Beyond a single system, we additionally investigated quantum correlations in composite systems.
	As a first example, we considered entanglement quasiprobabilities for bipartite entanglement.
	Specifically, we analyzed the impact of complex Hilbert spaces when compared to states over real numbers.
	With this example, we were able to jointly characterize quantum features stemming from entanglement and the necessity of complex-valued probability amplitudes in quantum physics.
	Furthermore, we considered quantum correlations originating from the exchange symmetry in systems of indistinguishable particles.
	Specifically, the bipartite quasiprobability distribution of a superposition of two fermions and bosons was determined.
	Eventually, we introduced quasiprobability distributions for multipartite entanglement.
	Here, we characterized the multipartite quantum correlations of an entire family of $N$-partite generalized Smolin states.

	In conclusion, a methodological framework is devised which provides a versatile approach to uncover various forms of quantum coherences and quantum correlations.
	As our decomposition strategy yields a convex expansion for any classical density operator, we are able to study the geometry of the set, or a subset, of mixed classical states, as we demonstrated for several examples.
	Furthermore, our technique is directly accessible and comparably simple, which is underlined by the fact that all results presented here are based on exact solutions.
	Moreover, our method is not limited to one specific form of quantum coherence in a quantum system.
	Therefore, we provide an easily accessible and intuitive approach to uncover general forms of quantum coherence via quasiprobabilities for characterizing quantum phenomena.

\begin{acknowledgments}
	This work has received funding from the European Union's Horizon 2020 Research and Innovation Program under grant agreement No. 665148 (QCUMbER).
\end{acknowledgments}



\begin{thebibliography}{99}

	\bibitem{W32} 
		E. Wigner,
		\textit{On the Quantum Correction For Thermodynamic Equilibrium},
		\href{https://doi.org/10.1103/PhysRev.40.749}{Phys. Rev. \textbf{40}, 749 (1932)}.
	\bibitem{LJJ09} 
		S.-W. Lee, H. Jeong, and D. Jaksch,
		\textit{Testing quantum nonlocality by generalized quasiprobability functions},
		\href{https://doi.org/10.1103/PhysRevA.80.022104}{Phys. Rev. A \textbf{80}, 022104 (2009)}.
	\bibitem{RLHL13} 
		J. Ryu, J. Lim, S. Hong, and J. Lee,
		\textit{Operational quasiprobabilities for qudits},
		\href{https://doi.org/10.1103/PhysRevA.88.052123}{Phys. Rev. A \textbf{88}, 052123 (2013)}.
	\bibitem{JRL17} 
		J. Jae, J. Ryu, and J. Lee,
		\textit{Operational quasiprobabilities for continuous variables},
		\href{https://doi.org/10.1103/PhysRevA.96.042121}{Phys. Rev. A \textbf{96}, 042121 (2017)}.
	\bibitem{S08} 
		R. W. Spekkens,
		\textit{Negativity and Contextuality are Equivalent Notions of Nonclassicality},
		\href{https://doi.org/10.1103/PhysRevLett.101.020401}{Phys. Rev. Lett. \textbf{101}, 020401 (2008)}.
	\bibitem{VFGE12} 
		V. Veitch, C. Ferrie, D. Gross, and J. Emerson,
		\textit{Negative quasi-probability as a resource for quantum computation},
		\href{https://doi.org/10.1088/1367-2630/14/11/113011}{New J. Phys. \textbf{14}, 113011 (2012)}.
	\bibitem{SLR17} 
		F. Shahandeh, A. P. Lund, and T. C. Ralph,
		\textit{Quantum Correlations in Nonlocal Boson Sampling},
		\href{https://doi.org/10.1103/PhysRevLett.119.120502}{Phys. Rev. Lett. \textbf{119}, 120502 (2017)}.
	\bibitem{MW95} 
		L. Mandel and E. Wolf,
		\textit{Optical Coherence and Quantum Optics}
		(\href{https://doi.org/10.1017/CBO9781139644105}{Cambridge University Press, 1995}).
	\bibitem{G63} 
		R. J. Glauber,
		\textit{Coherent and Incoherent States of the Radiation Field},
		\href{https://doi.org/10.1103/PhysRev.131.2766}{Phys. Rev. \textbf{131}, 2766 (1963)}.
	\bibitem{S63} 
		E. C. G. Sudarshan,
		\textit{Equivalence of Semiclassical and Quantum Mechanical Descriptions of Statistical Light Beams},
		\href{https://doi.org/10.1103/PhysRevLett.10.277}{Phys. Rev. Lett. \textbf{10}, 277 (1963)}.
	\bibitem{S26} 
		E. Schr\"odinger,
		\textit{Der stetige \"Ubergang von der Mikro- zur Makromechanik},
		\href{https://doi.org/10.1007/BF01507634}{Naturwiss. \textbf{14}, 664 (1926)}.
	\bibitem{CG69} 
		K. E. Cahill and R. J. Glauber,
		\textit{Density Operators and Quasiprobability Distributions},
		\href{https://doi.org/10.1103/PhysRev.177.1882}{Phys. Rev. \textbf{177}, 1882 (1969)}.
	\bibitem{AW70} 
		G. S. Agarwal and E. Wolf,
		\textit{Calculus for Functions of Noncommuting Operators and General Phase-Space Methods in Quantum Mechanics. II. Quantum Mechanics in Phase Space},
		\href{https://doi.org/10.1103/PhysRevD.2.2187}{Phys. Rev. D \textbf{2}, 2187 (1970)}.
	\bibitem{H40} 
		K. Husimi,
		\textit{Some formal properties of the density matrix},
		\href{https://doi.org/10.11429/ppmsj1919.22.4_264}{Proc. Phys. Math. Soc. Jpn. \textbf{22}, 264 (1940)}.
	\bibitem{K65} 
		Y. Kano,
		\textit{A New Phase-Space Distribution Function in the Statistical Theory of the Electromagnetic Field},
		\href{https://doi.org/10.1063/1.1704739}{J. Math. Phys. \textbf{6}, 1913 (1965)}.
	\bibitem{CD06} 
		J. F. Corney and P. D. Drummond,
		\textit{Gaussian phase-space representations for fermions},
		\href{https://doi.org/10.1103/PhysRevB.73.125112}{Phys. Rev. B \textbf{73}, 125112 (2006)}.
	\bibitem{DJB16} 
		B. J. Dalton, J. Jeffers, and S. M. Barnett,
		\textit{Grassmann phase space methods for fermions. I. Mode theory},
		\href{https://doi.org/10.1016/j.aop.2016.03.006}{Ann. Phys. (N.Y.) \textbf{370}, 12 (2016)}.
	\bibitem{P16} 
		E. A. Polyakov,
		\textit{Grassmann phase-space methods for fermions: Uncovering the classical probability structure},
		\href{https://doi.org/10.1103/PhysRevA.94.062104}{Phys. Rev. A \textbf{94}, 062104 (2016)}.
	\bibitem{DJB17} 
		B. J. Dalton, J. Jeffers, and S. M. Barnett,
		\textit{Grassmann phase space methods for fermions. II. Field theory},
		\href{https://doi.org/10.1016/j.aop.2016.12.026}{Ann. Phys. (N.Y.) \textbf{377}, 268 (2017)}.
	\bibitem{JRD17} 
		R. R. Joseph, L. E. C. Rosales-Z\'arate, and P. D. Drummond,
		\textit{Phase space methods for Majorana fermions},
		\href{https://arxiv.org/abs/1709.01985}{arXiv:1709.01985}.
	\bibitem{DMWS06} 
		J. P. Dahl, H. Mack, A. Wolf, and W. P. Schleich,
		\textit{Entanglement versus negative domains of Wigner functions},
		\href{https://doi.org/10.1103/PhysRevA.74.042323}{Phys. Rev. A \textbf{74}, 042323 (2006)}.
	\bibitem{WFPT17} 
		M. Walschaers, C. Fabre, V. Parigi, and N. Treps,
		\textit{Entanglement and Wigner Function Negativity of Multimode Non-Gaussian States},
		\href{https://doi.org/10.1103/PhysRevLett.119.183601}{Phys. Rev. Lett. \textbf{119}, 183601 (2017)}.
	\bibitem{WMV97} 
		S. Wallentowitz, R. L. de Matos Filho, and W. Vogel,
		\textit{Determination of entangled quantum states of a trapped atom},
		\href{https://doi.org/10.1103/PhysRevA.56.1205}{Phys. Rev. A \textbf{56}, 1205 (1997)}.
	\bibitem{ASCBZV17} 
		E. Agudelo, J. Sperling, L. S. Costanzo, M. Bellini, A. Zavatta, and W. Vogel,
		\textit{Conditional Hybrid Nonclassicality},
		\href{https://doi.org/10.1103/PhysRevLett.119.120403}{Phys. Rev. Lett. \textbf{119}, 120403 (2017)}.
	\bibitem{M07} 
		G. J. Milburn,
		\textit{Quasi-probability methods for multimode conditional optical gates},
		\href{https://doi.org/10.1364/JOSAB.24.000167}{J. Opt. Soc. Am. B \textbf{24}, 167 (2007)}.
	\bibitem{KVCBAP12} 
		T. Kiesel, W. Vogel, S. L. Christensen, J.-B. B\'eguin, J. Appel, and E. S. Polzik,
		\textit{Atomic nonclassicality quasiprobabilities},
		\href{https://doi.org/10.1103/PhysRevA.86.042108}{Phys. Rev. A \textbf{86}, 042108 (2012)}.
	\bibitem{PLLSZZZKN17} 
		J. Park, Y. Lu, J. Lee, Y. Shen, K. Zhang, S. Zhang, M. S. Zubairy, K. Kim, and H. Nha,
		\textit{Revealing nonclassicality beyond Gaussian states via a single marginal distribution},
		\href{https://doi.org/10.1073/pnas.1617621114}{Proc. Natl. Acad. Sci. U.S.A. \textbf{114}, 891 (2017)}.
	\bibitem{RKVGZB13} 
		S. Rahimi-Keshari, T. Kiesel, W. Vogel, S. Grandi, A. Zavatta, and M. Bellini,
		\textit{Quantum Process Nonclassicality},
		\href{https://doi.org/10.1103/PhysRevLett.110.160401}{Phys. Rev. Lett. \textbf{110}, 160401 (2013)}.
	\bibitem{V08} 
		W. Vogel,
		\textit{Nonclassical correlation properties of radiation fields},
		\href{https://doi.org/10.1103/PhysRevLett.100.013605}{Phys. Rev. Lett. \textbf{100}, 013605 (2008)}.
	\bibitem{KVS17} 
		F. Krumm, W. Vogel, and J. Sperling,
		\textit{Time-dependent quantum correlations in phase space},
		\href{https://doi.org/10.1103/PhysRevA.95.063805}{Phys. Rev. A \textbf{95}, 063805 (2017)}.
	\bibitem{SBRF93} 
		D. T. Smithey, M. Beck, M. G. Raymer, and A. Faridani,
		\textit{Measurement of the Wigner distribution and the density matrix of a light mode using optical homodyne tomography: Application to squeezed states and the vacuum},
		\href{https://doi.org/10.1103/PhysRevLett.70.1244}{Phys. Rev. Lett. \textbf{70}, 1244 (1993)}.
	\bibitem{LMKMIW96} 
		D. Leibfried, D. M. Meekhof, B. E. King, C. Monroe, W. M. Itano, and D. J. Wineland,
		\textit{Experimental Determination of the Motional Quantum State of a Trapped Atom},
		\href{https://doi.org/10.1103/PhysRevLett.77.4281}{Phys. Rev. Lett. \textbf{77}, 4281 (1996)}.
	\bibitem{BRWK99} 
		K. Banaszek, C. Radzewicz, K. W\'odkiewicz, and J. S. Krasi\'nski,
		\textit{Direct measurement of the Wigner function by photon counting},
		\href{https://doi.org/10.1103/PhysRevA.60.674}{Phys. Rev. A \textbf{60}, 674 (1999)}.
	\bibitem{LCGS10} 
		K. Laiho, K. N. Cassemiro, D. Gross, and C. Silberhorn,
		\textit{Probing the Negative Wigner Function of a Pulsed Single Photon Point by Point},
		\href{https://doi.org/10.1103/PhysRevLett.105.253603}{Phys. Rev. Lett. \textbf{105}, 253603 (2010)}.
	\bibitem{MZJHCV15} 
		R. McConnell, H. Zhang, J. Hu, S. \'Cuk, and V. Vuleti\'c,
		\textit{Entanglement with negative Wigner function of almost 3,000 atoms heralded by one photon},
		\href{https://doi.org/10.1038/nature14293}{Nature (London) \textbf{519}, 439 (2015)}.
	\bibitem{HSRHMSS16} 
		G. Harder, C. Silberhorn, J. Rehacek, Z. Hradil, L. Motka, B. Stoklasa, and L. L. S\'anchez-Soto,
		\textit{Local Sampling of the Wigner Function at Telecom Wavelength with Loss-Tolerant Detection of Photon Statistics},
		\href{https://doi.org/10.1103/PhysRevLett.116.133601}{Phys. Rev. Lett. \textbf{116}, 133601 (2016)}.
	\bibitem{BTBSSV18} 
		M. Bohmann, J. Tiedau, T. Bartley, J. Sperling, C. Silberhorn, and W. Vogel,
		\textit{Incomplete Detection of Nonclassical Phase-Space Distributions},
		\href{https://doi.org/10.1103/PhysRevLett.120.063607}{Phys. Rev. Lett. \textbf{120}, 063607 (2018)}.
	\bibitem{KSVS18} 
		O. P. Kovalenko, J. Sperling, W. Vogel, and A. A. Semenov,
		\textit{Geometrical picture of photocounting measurements}
		\href{https://doi.org/10.1103/PhysRevA.97.023845}{Phys. Rev. A \textbf{97}, 023845 (2018).}.
	\bibitem{BQTSLSKB15} 
		N. Bent, H. Qassim, A. A. Tahir, D. Sych, G. Leuchs, L. L. S\'anchez-Soto, E. Karimi, and R. W. Boyd,
		\textit{Experimental Realization of Quantum Tomography of Photonic Qudits via Symmetric Informationally Complete Positive Operator-Valued Measures},
		\href{https://doi.org/10.1103/PhysRevX.5.041006}{Phys. Rev. X \textbf{5}, 041006 (2015)}.
	\bibitem{AD71} 
		P. W. Atkins and J. C. Dobson,
		\textit{Angular momentum coherent states},
		\href{https://doi.org/10.1098/rspa.1971.0035}{Proc. R. Soc. London Ser. A \textbf{321}, 321  (1971)}.
	\bibitem{ACGT72} 
		F. T. Arecchi, E. Courtens, R. Gilmore, and H. Thomas,
		\textit{Atomic Coherent States in Quantum Optics},
		\href{https://doi.org/10.1103/PhysRevA.6.2211}{Phys. Rev. A \textbf{6}, 2211 (1972)}.
	\bibitem{A81} 
		G. S. Agarwal,
		\textit{Relation between atomic coherent-state representation, state multipoles, and generalized phase-space distributions},
		\href{https://doi.org/10.1103/PhysRevA.24.2889}{Phys. Rev. A \textbf{24}, 2889 (1981)}.
	\bibitem{A93} 
		G. S. Agarwal
		\textit{Perspective of Einstein-Podolsky-Rosen spin correlations in the phase-space formulation for arbitrary values of the spin},
		\href{https://doi.org/10.1103/PhysRevA.47.4608}{Phys. Rev. A \textbf{47}, 4608 (1993)}.
	\bibitem{DAS94} 
		J. P. Dowling, G. S. Agarwal, and W. P. Schleich,
		\textit{Wigner distribution of a general angular-momentum state: Applications to a collection of two-level atoms},
		\href{https://doi.org/10.1103/PhysRevA.49.4101}{Phys. Rev. A \textbf{49}, 4101 (1994)}.
	\bibitem{A99} 
		G. S. Agarwal,
		\textit{SU(2) structure of the Poincar\'e sphere for light beams with orbital angular momentum},
		\href{https://doi.org/10.1364/JOSAA.16.002914}{J. Opt. Soc. Am. A \textbf{16}, 2914 (1999)}.
	\bibitem{WBIMH92} 
		D. J. Wineland, J. J. Bollinger, W. M. Itano, F. L. Moore, and D. J. Heinzen,
		\textit{Spin squeezing and reduced quantum noise in spectroscopy},
		\href{https://doi.org/10.1103/PhysRevA.46.R6797}{Phys. Rev. A \textbf{46}, R6797 (1992)}.
	\bibitem{LLA09} 
		M. Lassen, G. Leuchs, and U. L. Andersen,
		\textit{Continuous Variable Entanglement and Squeezing of Orbital Angular Momentum States},
		\href{https://doi.org/10.1103/PhysRevLett.102.163602}{Phys. Rev. Lett. \textbf{102}, 163602 (2009)}.
	\bibitem{L06} 
		A. Luis,
		\textit{Nonclassical polarization states},
		\href{https://doi.org/10.1103/PhysRevA.73.063806}{Phys. Rev. A \textbf{73}, 063806 (2006)}.
	\bibitem{KLLRS02} 
		N. Korolkova, G. Leuchs, R. Loudon, T. C. Ralph, and C. Silberhorn,
		\textit{Polarization squeezing and continuous-variable polarization entanglement},
		\href{https://doi.org/10.1103/PhysRevA.65.052306}{Phys. Rev. A \textbf{65}, 052306 (2002)}.
	\bibitem{MSPGRHKLMS12} 
		C. R. M\"uller, B. Stoklasa, C. Peuntinger, C. Gabriel, J. \v{R}eh\'a\v{c}ek, Z. Hradil, A. B. Klimov, G. Leuchs, C. Marquardt, and L. L. S\'anchez-Soto,
		\textit{Quantum polarization tomography of bright squeezed light},
		\href{https://doi.org/10.1088/1367-2630/14/8/085002}{New J. Phys. \textbf{14}, 085002 (2012)}.
	\bibitem{SBKSL12} 
		J. S\"oderholm, G. Bj\"ork, A. B. Klimov, L. L. S\'anchez-Soto, and G. Leuchs,
		\textit{Quantum polarization characterization and tomography},
		\href{https://doi.org/10.1088/1367-2630/14/11/115014}{New J. Phys. \textbf{14}, 115014 (2012)}.
	\bibitem{KRW14} 
		A. B. Klimov, J. L. Romero, and S. Wallentowitz,
		\textit{Quantum-state tomography for optical polarization with arbitrary photon numbers},
		\href{https://doi.org/10.1103/PhysRevA.89.020101}{Phys. Rev. A \textbf{89}, 020101(R) (2014)}.
	\bibitem{C65} 
		K. E. Cahill,
		\textit{Coherent-state representations for the photon density operator},
		\href{https://doi.org/10.1103/PhysRev.138.B1566}{Phys. Rev. \textbf{138}, B1566 (1965)}.
	\bibitem{S16} 
		J. Sperling,
		\textit{Characterizing maximally singular phase-space distributions},
		\href{https://doi.org/10.1103/PhysRevA.94.013814}{Phys. Rev. A \textbf{94}, 013814 (2016)}.
	\bibitem{K66} 
		J. R. Klauder,
		\textit{Improved Version of Optical Equivalence Corollary},
		\href{https://doi.org/10.1103/PhysRevLett.16.534}{Phys. Rev. Lett. \textbf{16}, 534 (1966)}.
	\bibitem{KV10} 
		T. Kiesel and W. Vogel,
		\textit{Nonclassicality filters and quasi-probabilities},
		\href{https://doi.org/10.1103/PhysRevA.82.032107}{Phys. Rev. A \textbf{82}, 032107 (2010)}.
	\bibitem{ASV13} 
		E. Agudelo, J. Sperling, and W. Vogel,
		\textit{Quasiprobabilities for multipartite quantum correlations of light},
		\href{https://doi.org/10.1103/PhysRevA.87.033811}{Phys. Rev. A \textbf{87}, 033811 (2013)}.
	\bibitem{KVHS11} 
		T. Kiesel, W. Vogel, B. Hage, and R. Schnabel,
		\textit{Direct Sampling of Negative Quasiprobabilities of a Squeezed State},
		\href{https://doi.org/10.1103/PhysRevLett.107.113604}{Phys. Rev. Lett. \textbf{107}, 113604 (2011)}.
	\bibitem{ASVKMH15} 
		E. Agudelo, J. Sperling, W. Vogel, S. K\"ohnke, M. Mraz, and B. Hage,
		\textit{Continuous sampling of the squeezed-state nonclassicality},
		\href{https://doi.org/10.1103/PhysRevA.92.033837}{Phys. Rev. A \textbf{92}, 033837 (2015)}.
	\bibitem{LP98} 
		A. Luis and J. Perina,
		\textit{Discrete Wigner function for finite-dimensional systems},
		\href{https://doi.org/10.1088/0305-4470/31/5/012}{J. Phys. A: Math. Gen. \textbf{31}, 1423 (1998)}.
	\bibitem{KMR06} 
		A. B. Klimov, C. Mu\~{n}oz, and J. L. Romero,
		\textit{Geometrical approach to the discrete Wigner function in prime power dimensions},
		\href{https://doi.org/10.1088/0305-4470/39/46/016}{J. Phys. A: Math. Gen. \textbf{39}, 14471 (2006)}.
	\bibitem{PB11} 
		M. K. Patra and S. L. Braunstein,
		\textit{Quantum Fourier transform, Heisenberg groups and quasi-probability distributions},
		\href{https://doi.org/10.1088/1367-2630/13/6/063013}{New J. Phys. \textbf{13}, 063013 (2011)}.
	\bibitem{RMG05} 
		M. Ruzzi, M. A. Marchiolli, and D. Galetti,
		\textit{Extended Cahill-Glauber formalism for finite-dimensional spaces: I. Fundamentals},
		\href{https://doi.org/10.1088/0305-4470/38/27/010}{J. Phys. A: Math. Gen. \textbf{38}, 6239 (2005)}.
	\bibitem{FE09} 
		C. Ferrie and J. Emerson,
		\textit{Framed Hilbert space: hanging the quasi-probability pictures of quantum theory},
		\href{https://doi.org/10.1088/1367-2630/11/6/063040}{New J. Phys. \textbf{11}, 063040 (2009)}.
	\bibitem{NC00} 
		M. A. Nielsen and I. L. Chuang,
		\textit{Quantum Computation and Quantum Information}
		(\href{https://doi.org/10.1017/CBO9780511976667}{Cambridge University Press, 2000}).
	\bibitem{P96} 
		R. R. Puri,
		\textit{A quasiprobability based criterion for classifying the states of N spin-1/2s as classical or nonclassical},
		\href{https://doi.org/10.1088/0305-4470/29/17/041}{J. Phys. A: Math. Gen. \textbf{29}, 5719 (1996)}.
	\bibitem{TMS01} 
		M. O. Terra Cunha, V. I. Man'ko, and M. O. Scully,
		\textit{Quasiprobability and Probability Distributions for Spin-1/2 States},
		\href{https://doi.org/10.1023/A:1012373419313}{Found. Phys. Lett. \textbf{14}, 103 (2001)}.
	\bibitem{P12} 
		R. R. Puri,
		\textit{Quasiprobability-based criterion for classicality and separability of states of spin-1/2 particles},
		\href{https://doi.org/10.1103/PhysRevA.86.052111}{Phys. Rev. A \textbf{86}, 052111 (2012)}.
	\bibitem{FLT17} 
		K. Fukuda, J. Lee, and I. Tsutsui,
		\textit{Weak Value, Quasiprobability and Bohmian Mechanics},
		\href{https://doi.org/10.1007/s10701-016-0054-3}{Found. Phys. \textbf{47}, 236 (2017)}.
	\bibitem{PWB15} 
		H. Pashayan, J. J. Wallman, and S. D. Bartlett,
		\textit{Estimating Outcome Probabilities of Quantum Circuits Using Quasiprobabilities},
		\href{https://doi.org/10.1103/PhysRevLett.115.070501}{Phys. Rev. Lett. \textbf{115}, 070501 (2015)}.
	\bibitem{P72} 
		A. M. Perelomov,
		\textit{Coherent states for arbitrary Lie group},
		\href{https://doi.org/10.1007/BF01645091}{Commun. Math. Phys. \textbf{26}, 222 (1972)}.
	\bibitem{BM98} 
		C. Brif and A. Mann,
		\textit{A general theory of phase-space quasiprobability distributions},
		\href{https://doi.org/10.1088/0305-4470/31/1/002}{J. Phys. A: Math. Gen. \textbf{31}, L9 (1998)}.
	\bibitem{WB12} 
		J. J. Wallman and S. D. Bartlett,
		\textit{Non-negative subtheories and quasiprobability representations of qubits},
		\href{https://doi.org/10.1103/PhysRevA.85.062121}{Phys. Rev. A \textbf{85}, 062121 (2012)}.
	\bibitem{SAWV17} 
		J. Sperling, E. Agudelo, I. A. Walmsley, and W. Vogel,
		\textit{Quantum correlations in composite systems},
		\href{https://doi.org/10.1088/1361-6455/aa7438}{J. Phys. B: At. Mol. Opt. Phys. \textbf{50}, 134003 (2017)}.
	\bibitem{SLR17a} 
		F. Shahandeh, A. P. Lund, and T. C. Ralph,
		\textit{Quantum Correlations and Global Coherence in Distributed Quantum Computing},
		\href{https://arxiv.org/abs/1706.00478}{arXiv:1706.00478}.
	\bibitem{SAP17} 
		A. Streltsov, G. Adesso, and M. B. Plenio,
		\textit{Colloquium: Quantum coherence as a resource},
		\href{https://doi.org/10.1103/RevModPhys.89.041003}{Rev. Mod. Phys. \textbf{89}, 041003 (2017)}.
	\bibitem{BCP14} 
		T. Baumgratz, M. Cramer, and M. B. Plenio
		\textit{Quantifying Coherence},
		\href{https://doi.org/10.1103/PhysRevLett.113.140401}{Phys. Rev. Lett. \textbf{113}, 140401 (2014)}.
	\bibitem{LM14} 
		F. Levi and F. Mintert,
		\textit{A quantitative theory of coherent delocalization},
		\href{https://doi.org/10.1088/1367-2630/16/3/033007}{New J. Phys. \textbf{16}, 033007 (2014)}.
	\bibitem{SV15} 
		J. Sperling and W. Vogel,
		\textit{Convex ordering and quantification of quantumness},
		\href{https://doi.org/10.1088/0031-8949/90/7/074024}{Phys. Scr. \textbf{90}, 074024 (2015)}.
	\bibitem{WY16} 
		A. Winter and D. Yang,
		\textit{Operational Resource Theory of Coherence},
		\href{https://doi.org/10.1103/PhysRevLett.116.120404}{Phys. Rev. Lett. \textbf{116}, 120404 (2016)}.
	\bibitem{YMGGV16} 
		B. Yadin, J. Ma, D. Girolami, M. Gu, and V. Vedral,
		\textit{Quantum Processes Which Do Not Use Coherence},
		\href{https://doi.org/10.1103/PhysRevX.6.041028}{Phys. Rev. X \textbf{6}, 041028 (2016)}.
	\bibitem{SW18} 
		J. Sperling and I. A. Walmsley
		\textit{Classical evolution of quantum systems},
		\href{https://arxiv.org/abs/1802.02807}{arXiv:1802.02807}.
	\bibitem{SPBW17} 
		J. Sperling, A. Perez-Leija, K. Busch, and I. A. Walmsley,
		\textit{Quantum coherences of indistinguishable particles},
		\href{https://doi.org/10.1103/PhysRevA.96.032334}{Phys. Rev. A \textbf{96}, 032334 (2017)}.
	\bibitem{W89} 
		R. F. Werner,
		\textit{Quantum states with Einstein-Podolsky-Rosen correlations admitting a hidden-variable model},
		\href{https://doi.org/10.1103/PhysRevA.40.4277}{Phys. Rev. A \textbf{40}, 4277 (1989)}.
	\bibitem{STV98} 
		A. Sanpera, R. Tarrach, and G. Vidal,
		\textit{Local description of quantum inseparability},
		\href{https://doi.org/10.1103/PhysRevA.58.826}{Phys. Rev. A \textbf{58}, 826 (1998)}.
	\bibitem{SV09quasi} 
		J. Sperling and W. Vogel,
		\textit{Representation of entanglement by negative quasiprobabilities},
		\href{https://doi.org/10.1103/PhysRevA.79.042337}{Phys. Rev. A \textbf{79}, 042337 (2009)};
		\textit{ibid.},
		\href{https://doi.org/10.1103/PhysRevA.80.029905}{Phys. Rev. A \textbf{80}, 029905 (2009)}.
	\bibitem{PJSZ15}
		Z. Pucha\l{}a, A. Jen\v{c}ov\'a, M. Sedl\'ak, and M. Ziman,
		\textit{Exploring boundaries of quantum convex structures: Special role of unitary processes},
		\href{https://doi.org/10.1103/PhysRevA.92.012304}{Phys. Rev. A \textbf{92}, 012304 (2015)}.
	\bibitem{Z16} 
		H. Zhu,
		\textit{Quasiprobability Representations of Quantum Mechanics with Minimal Negativity},
		\href{https://doi.org/10.1103/PhysRevLett.117.120404}{Phys. Rev. Lett. \textbf{117}, 120404 (2016)}.
	\bibitem{TESMN16}
		T. Tilma, M. J. Everitt, J. H. Samson, W. J. Munro, and K. Nemoto,
		\textit{Wigner Functions for Arbitrary Quantum Systems},
		\href{https://doi.org/10.1103/PhysRevLett.117.180401}{Phys. Rev. Lett. \textbf{117}, 180401 (2016)}.
	\bibitem{DF17} 
		J. B. DeBrota and C. A. Fuchs,
		\textit{Negativity Bounds for Weyl-Heisenberg Quasiprobability Representations},
		\href{https://doi.org/10.1007/s10701-017-0098-z}{Found. Phys. \textbf{47}, 1009 (2017)}.
	\bibitem{TG65} 
		U. M. Titulaer and R. J. Glauber,
		\textit{Correlation functions for coherent fields},
		\href{https://doi.org/10.1103/PhysRev.140.B676}{Phys. Rev. \textbf{140}, B676 (1965)}.
	\bibitem{M86} 
		L. Mandel,
		\textit{Non-classical states of the electromagnetic field},
		\href{https://doi.org/10.1088/0031-8949/1986/T12/005}{Phys. Scr. T \textbf{12}, 34 (1986)}.
	\bibitem{KVPZB08} 
		T. Kiesel, W. Vogel, V. Parigi, A. Zavatta, and M. Bellini,
		\textit{Experimental determination of a nonclassical Glauber-Sudarshan P function},
		\href{https://doi.org/10.1103/PhysRevA.78.021804}{Phys. Rev. A \textbf{78}, 021804(R) (2008)}.
	\bibitem{AT92} 
		G. S. Agarwal and K. Tara,
		\textit{Nonclassical character of states exhibiting no squeezing or sub-Poissonian statistics},
		\href{https://doi.org/10.1103/PhysRevA.46.485}{Phys. Rev. A \textbf{46}, 485 (1992)}.
	\bibitem{G07} 
		P. M. Gruber,
		\textit{Convex and discrete geometry}
		(\href{https://doi.org/10.1007/978-3-540-71133-9}{Springer, 2007}).
	\bibitem{HHH96} 
		M. Horodecki, P. Horodecki, and R. Horodecki,
		\textit{Separability of mixed states: Necessary and sufficient conditions},
		\href{https://doi.org/10.1016/S0375-9601(96)00706-2}{Phys. Lett. A \textbf{223}, 1 (1996)}.
	\bibitem{L84} 
		C. T. Lee,
		\textit{Q representation of the atomic coherent states and the origin of fluctuations in superfluorescence},
		\href{https://doi.org/10.1103/PhysRevA.30.3308}{Phys. Rev. A \textbf{30}, 3308 (1984)}.
	\bibitem{GBB08} 
		O. Giraud, P. Braun, and D. Braun,
		\textit{Classicality of spin states},
		\href{https://doi.org/10.1103/PhysRevA.78.042112}{Phys. Rev. A \textbf{78}, 042112 (2008)}.
	\bibitem{HHHH09} 
		R. Horodecki, P. Horodecki, M. Horodecki, and K. Horodecki,
		\textit{Quantum entanglement},
		\href{https://doi.org/10.1103/RevModPhys.81.865}{Rev. Mod. Phys. \textbf{81}, 865 (2009)}.
	\bibitem{SV09} 
		J. Sperling and W. Vogel,
		\textit{Necessary and sufficient conditions for bipartite entanglement},
		\href{https://doi.org/10.1103/PhysRevA.79.022318}{Phys. Rev. A \textbf{79}, 022318 (2009)}.
	\bibitem{SV12} 
		J. Sperling and W. Vogel,
		\textit{Entanglement quasiprobabilities of squeezed light},
		\href{https://doi.org/10.1088/1367-2630/14/5/055026}{New J. Phys. \textbf{14}, 055026 (2012)}.
	\bibitem{BSV17} 
		M. Bohmann, J. Sperling, and W. Vogel,
		\textit{Entanglement verification of noisy NOON states},
		\href{https://doi.org/10.1103/PhysRevA.96.012321}{Phys. Rev. A \textbf{96}, 012321 (2017)}.
	\bibitem{TBV17} 
		P. Thomas, M. Bohmann, and W. Vogel,
		\textit{Verifying bound entanglement of dephased Werner states},
		\href{https://doi.org/10.1103/PhysRevA.96.042321}{Phys. Rev. A \textbf{96}, 042321 (2017)}.
	\bibitem{SW17}
		J. Sperling and I. A. Walmsley,
		\textit{Separable and Inseparable Quantum Trajectories},
		\href{https://doi.org/10.1103/PhysRevLett.119.170401}{Phys. Rev. Lett. \textbf{119}, 170401 (2017)}.
	\bibitem{comment:Pauli} 
		The Pauli matrices are $\hat\sigma_x=|1\rangle\langle0|+|0\rangle\langle 1|$, $\hat\sigma_y=i|0\rangle\langle1|-i|1\rangle\langle 0|$, and $\hat\sigma_z=|1\rangle\langle1|-|0\rangle\langle 0|$.
		The identity reads $\hat\sigma_0=|0\rangle\langle 0|+|1\rangle\langle 1|$.
		The eigenvectors for the $x$, $y$, and $z$ component are $|x_\pm\rangle=(|1\rangle\pm|0\rangle)/\sqrt 2$, $|y_\pm\rangle=(|1\rangle\pm i|0\rangle)/\sqrt 2$, $|z_+\rangle=|1\rangle$, and $|z_-\rangle=|0\rangle$ to the eigenvalues $\pm 1$.
		Furthermore, one can write $|w_\pm\rangle\langle w_\pm|=(\hat\sigma_0\pm\hat\sigma_w)/2$ for $w=x,y,z$.
	\bibitem{CFR01} 
		C. M. Caves, C. A. Fuchs, and P. Rungta,
		\textit{Entanglement of formation of an arbitrary state of two rebits},
		\href{https://doi.org/10.1023/A:1012215309321}{Found. Phys. Lett. \textbf{14}, 199 (2001)}.
	\bibitem{BKABWD00} 
		A. N. Boto, P. Kok, D. S. Abrams, S. L. Braunstein, C. P. Williams, and J. P. Dowling,
		\textit{Quantum Interferometric Optical Lithography: Exploiting Entanglement to Beat the Diffraction Limit},
		\href{https://doi.org/10.1103/PhysRevLett.85.2733}{Phys. Rev. Lett. \textbf{85}, 2733 (2000)}.
	\bibitem{SV13} 
		J. Sperling and W. Vogel,
		\textit{Multipartite Entanglement Witnesses},
		\href{https://doi.org/10.1103/PhysRevLett.111.110503}{Phys. Rev. Lett. \textbf{111}, 110503 (2013)}.
	\bibitem{S01} 
		J. A. Smolin,
		\textit{Four-party unlockable bound entangled state},
		\href{https://doi.org/10.1103/PhysRevA.63.032306}{Phys. Rev. A \textbf{63}, 032306 (2001)}.
	\bibitem{AH06} 
		R. Augusiak and P. Horodecki,
		\textit{Generalized Smolin states and their properties},
		\href{https://doi.org/10.1103/PhysRevA.73.012318}{Phys. Rev. A \textbf{73}, 012318 (2006)}.
	\bibitem{AB09}
		E. Amselem and M. Bourennane,
		\textit{Experimental four-qubit bound entanglement},
		\href{https://doi.org/10.1038/nphys1372}{Nat. Phys. \textbf{5}, 748 (2009)}.
	\bibitem{LKPR10} 
		J. Lavoie, R. Kaltenbaek, M. Piani, and K. J. Resch,
		\textit{Experimental Bound Entanglement in a Four-Photon State},
		\href{https://doi.org/10.1103/PhysRevLett.105.130501}{Phys. Rev. Lett. \textbf{105}, 130501 (2010)}.
	\bibitem{BSGMCRHB10} 
		J. T. Barreiro, P. Schindler, O. G\"uhne, T. Monz, M. Chwalla, C. F. Roos, M. Hennrich, and R. Blatt,
		\textit{Experimental multiparticle entanglement dynamics induced by decoherence},
		\href{https://doi.org/10.1038/nphys1781}{Nat. Phys. \textbf{6}, 943 (2010)}.
	\bibitem{HP40} 
		T. Holstein and H. Primakoff,
		\textit{Field Dependence of the Intrinsic Domain Magnetization of a Ferromagnet},
		\href{https://doi.org/10.1103/PhysRev.58.1098}{Phys. Rev. \textbf{58}, 1098 (1940)}.

\end{thebibliography}
\end{document}